\renewcommand\footnotetextcopyrightpermission[1]{} % removes footnote with conference information in first column
\def\argmin{\mathop{\rm argmin}}% argmin; indices appear below word
\begin{document}
\title{Optimal Net-Load Balancing in Smart Grids with High PV Penetration}
\titlenote{This work has been funded by the U.S. National Science Foundation under grant number ACI 1339756 and the Department of Energy (DoE) under award number DE-EE0008003.}

%\subtitle{Extended Abstract}
%\subtitlenote{The full version of the author's guide is available as
%  \texttt{acmart.pdf} document}

\author{Sanmukh R. Kuppannagari}
\affiliation{\institution{Ming Hsieh Department of Electrical Engineering, University of Southern California, Los Angeles, CA-90089}}
\email{kuppanna@usc.edu}

\author{Rajgopal Kannan}
\affiliation{\institution{US Army Research Lab, 12015 Waterfront Drive, Playa Vista, CA-90094}}
\email{rajgopal.kannan.civ@mail.mil}

\author{Viktor K. Prasanna}
\affiliation{\institution{Ming Hsieh Department of Electrical Engineering, University of Southern California, Los Angeles, CA-90089}}
\email{prasanna@usc.edu}

% The default list of authors is too long for headers}
%\renewcommand{\shortauthors}{S. Kuppannagari et al.}

\begin{abstract}
Mitigating Supply-Demand mismatch is critical for smooth power grid operation. Traditionally, load curtailment techniques such as Demand Response (DR) have been used for this purpose. However, these cannot be the only component of a net-load balancing framework for Smart Grids with high PV penetration. These grids can sometimes exhibit supply surplus causing over-voltages. Supply curtailment techniques such as Volt-Var Optimizations are complex and computationally expensive. This increases the complexity of net-load balancing systems used by the grid operator and limits their scalability. Recently new technologies have been developed that enable the rapid and selective connection of PV modules of an installation to the grid. Taking advantage of these advancements, we develop a unified optimal net-load balancing framework which performs both load and solar curtailment. We show that when the available curtailment values are discrete, this problem is NP-hard and develop bounded approximation algorithms for minimizing the curtailment cost. Our algorithms produce fast solutions, given the tight timing constraints required for grid operation. We also incorporate the notion of fairness to ensure that curtailment is evenly distributed among all the nodes. Finally, we develop an online algorithm which performs net-load balancing using only data available for the current interval. Using both theoretical analysis and practical evaluations, we show that our net-load balancing algorithms provide solutions which are close to optimal in a small amount of time. 
\end{abstract}

\maketitle

\section{Introduction}
% a para about smart grid
Electrical power grids have undergone a drastic transformation since the 1970s in terms of both scale and complexity~\cite{ten2008vulnerability}. Technological advances such as the use of bi-directional AMI meters, allowing real time remote monitoring and control, have transformed them into smart grids~\cite{moslehi2010reliability}. 

% a para about solar power
Adoption of distributed solar energy has increased dramatically due to the falling cost of solar PVs. The installed prices of U.S. residential and commercial PV systems declined 5-7\% on average during 1998-2011~\cite{pvprices}. As per the DoE SunShot vision document, solar generated power is expected to grow to 14\% of the total power supply in 2030 and 27\% by 2050~\cite{sunshot}.

% uncertainties of solar power
Ensuring the matching of demand (load) and supply in a smart grid, also known as net-load balancing, is a critical grid operation. However, the increase in available power supply from solar energy is opening up new challenges in net-load balancing~\cite{basak2012literature}. Solar energy is heavily influenced by the ever changing weather conditions. This high variability in solar generation can lead to frequent demand-supply mismatches. This is especially pronounced in a distribution grid where a significant portion of supply comes from solar generation. This mismatch, if left unmitigated, can lead to 1) blackouts, if the demand is higher than the generation or 2) cause over-voltages and equipment tripping requiring manual intervention, if generation is higher than demand~\cite{gagrica2015microinverter}. 

% a para on current mitigation techniques
Load curtailment techniques for net-load balancing have been studied widely~\cite{albadi2008summary}. However, the issue of surplus supply must also be addressed to avoid over-voltages. Voltage Var Optimization (VVO) is a technique used to mitigate the over-voltage problem. VVO works by injecting the required amount of reactive power to reduce voltages to within the tolerable range~\cite{rahimi2012evaluation}. Grid operations have tight timing constraints and require solutions with low response time. Calculating the right amount of reactive power to be injected at each node of the grid requires solving Optimal Power Flow (OPF) equations, which are not scalable~\cite{rahimi2012evaluation}. New PV technology allows us to leverage the micro-inverters installed at PV installations. These micro-inverters provide the capability to (dis)connect a subset of PVs from each installation in the grid~\cite{gagrica2015microinverter}. We leverage this capability in our framework. For each PV installation, this provides us with a discrete set of solar curtailment strategies.

% our work, our contributions
In this work, we develop a net-load balancing framework which can perform both supply and demand curtailment over a horizon. Determining load or supply curtailment strategies when each strategy exhibits a discrete curtailment value is, as we show in this work, an NP-hard problem. Current techniques for curtailment strategy selection provide computationally expensive optimal solutions or faster heuristics with no optimality bounds. In contrast, we develop fast and optimal net-load balancing algorithms as a core component of our framework. Our algorithms minimize the cost of curtailment while ensuring that several practical constraints such as achieving the curtailment target, fairness etc. are met. We also develop an online heuristic to address scenarios where load and generation predictions for the entire horizon are not available beforehand. Using both theoretical analysis and practical evaluations, we show that the solutions provided by our net-load balancing algorithms are both scalable and near optimal.

\section{Related Works}
Significant literature exists on performing net-load balancing using load curtailment techniques such as Demand Response (DR). The key idea is to `shift' the loads away from high demand periods. Load curtailment can be pricing based in which the customers are incentivised or penalized to curtail. Works such as~\cite{chen2012optimal,roos1996modelling} fall into this category. Curtailment can also be performed using direct control from the grid operator. This scenario is better suited for micro-grids such as a University/Industrial campus.

Techniques which focus on direct control based load curtailment fall into two broad categories. The first category consists of stochastic optimization based approaches such as~\cite{kwac2013demand} and~\cite{chen2012real}. One limitation of such approaches is that they require a large number of nodes to ensure that the targeted curtailment is met with high probability -- this may not always be feasible~\cite{kuppannagari2016optimal}. Another approach is deterministic load curtailment in which nodes adopt curtailment strategies. In the real world (including our campus microgrid experience), strategies have discrete curtailment values. Nodes (buildings) can choose strategies from the strategy space such that the total curtailment objective is  satisfied while other practical constraints are met. The strategy selection problem, in general, is  NP-hard and hence it is difficult to get exact results in a reasonable amount of time. Here again there are two approaches. The first is to forgo accuracy guarantees in favor of performance. Techniques such as~\cite{roy2017},~\cite{zois2014efficient} and~\cite{ruiz2009direct} develop fast algorithms which can have arbitrarily large errors in the objective function (utility maximization, cost minimization etc.). Authors in~\cite{roy2017} develop a genetic algorithm based heuristic while \cite{zois2014efficient} presents a heuristic based on change making. The algorithm developed in~\cite{ruiz2009direct} uses Linear Programming whose solutions need to be rounded to integral values and can have large errors (unbounded integrality gap). The second approach is to provide computationally expensive exact solutions, for example,~\cite{luo1998milp} and~\cite{barbato2015energy}, where the authors use Mixed ILP for their algorithm. Previously, we developed polynomial time approximation algorithms for ``Sustainable'' Demand Response in which aggregate curtailment was bounded over intervals of the DR event~\cite{kuppannagari2016optimal,kuppannagari2015}. However, we did not consider net-load balancing along with fairness and curtailment cost objectives, as proposed.

Load Curtailment techniques are ineffective when supply due to solar PVs exceeds the demand. If this is left unmitigated, it causes over-voltages in the system leading to failures. Several works perform reactive solar curtailment in response to rising voltage. VVO~\cite{rahimi2012evaluation} increases reactive power to lower the voltage due to real power while iPlug curtails the solar energy input to the grid by redirecting it to charge storage or coordinate with local demand ramp-up resources~\cite{rongali2016iplug}. The authors in~\cite{lee2017distributed,tonkoski2011coordinated,singh2017sunshade} achieve continuous curtailment from solar PVs by running them at voltages other than the Maximum Power Point (MPP). This requires fine grained control of the solar panels. For some scenarios, fine grained control might not be available due to limitations of inverter technology. Our work addresses such scenarios through a curtailment model that handles a discrete set of curtailment values and provides bounded polynomial time approximations for achieving discrete curtailment targets.  As mentioned in~\cite{gagrica2015microinverter}, discrete solar curtailment can be performed by simply disconnecting individual PV modules using the micro-inverters installed at PV installations. As opposed to the technique developed in~\cite{gagrica2015microinverter}, which is reactive to over-voltages, we perform proactive solar curtailment.

%The assumption here is the presence of software controls to enable operating at these other voltages. The works mentioned above achieve continuous curtailment, the optimal values of which can be determined using polynomial time convex optimization algorithm. We focus on solar curtailment techniques which achieve discrete curtailment values, thus making the problem harder. As mentioned in~\cite{gagrica2015microinverter}, discrete solar curtailment can be performed by simply disconnecting individual PV modules using the micro-inverters installed PV installations. As opposed to the technique developed in~\cite{gagrica2015microinverter}, which is reactive to over-voltages, we perform proactive solar curtailment.  

%Works such as~\cite{moradi2017optimal} develop a scheduling algorithm which determines the schedule of generation from various sources for the entire day. The sources include a mix of renewable and non-renewable sources and the objective is to minimize cost. The resulting optimization problem is complex as it has to make a decision for each node for each interval of the entire day. This makes it difficult to scale this algorithm for large distribution grid or micro-grids with rooftop PVs installed. genetic~\cite{logenthiran2012demand} and

\section{Our Contributions}
Our work tries to address the limitations of the current frameworks by making the following contributions:
\begin{enumerate}
	\item To the best of our knowledge, ours is the first work to develop a curtailment strategy selection framework which can perform discrete solar curtailment pro actively to avoid over-voltages.
	\item We develop a unified framework which performs both load and solar curtailment. This greatly simplifies the overhead involved in grid management for the operator.
	\item We develop algorithms which are fast and provide worst case accuracy guarantee. Hence, we can simultaneously achieve the conflicting goals of accuracy and computational tractability with an ability to trade-off one for the other.
	\item We incorporate the notion of fairness into our algorithms and also develop an online algorithm for the cases when forecasts for the entire horizon are unavailable.
\end{enumerate}

\section{Net-Load Balancing in Smart Grids}
\subsection{Motivation}
The Smart Grid that we consider in this work consists of several demand nodes: consumers of electricity, and supply nodes: electricity producers. The supply nodes are the customers who have solar PVs installed. A node can act both as a demand node and a supply node. The Smart Grid has a high PV penetration i.e., the supply from solar PVs under normal weather conditions meet the demand of the consumers for most of the day. We assume that during night or during extremely unfavorable weather conditions, conventional sources of electricity are used to meet the demand. 

Mitigating supply-demand mismatch within tight timing constraints is critical for smooth operation of a smart grid. As shown in Figure~\ref{fig:fig1},  during several intervals of the day, such as regions 1 and 3, the demand of the consumers can exceed the solar supply. This can cause blackouts in the grid. Demand curtailment strategies need to be adopted during such intervals to avoid blackouts. The other extreme is shown using region 2 in Figure~\ref{fig:fig1}. These are the intervals in which the supply due to solar PVs exceeds the demand. This can cause over-voltages in the grid leading to the tripping of fault prevention devices~\cite{gagrica2015microinverter}. Under this scenario, supply curtailment strategies are required. 

We assume there exists a centralized grid operator with the capability of remotely switching a node into a curtailment strategy. A general framework which performs both load and supply curtailment greatly simplifies grid management for the grid operator.    

\begin{figure}[h!]
	\centering
	\includegraphics[width=0.99\linewidth]{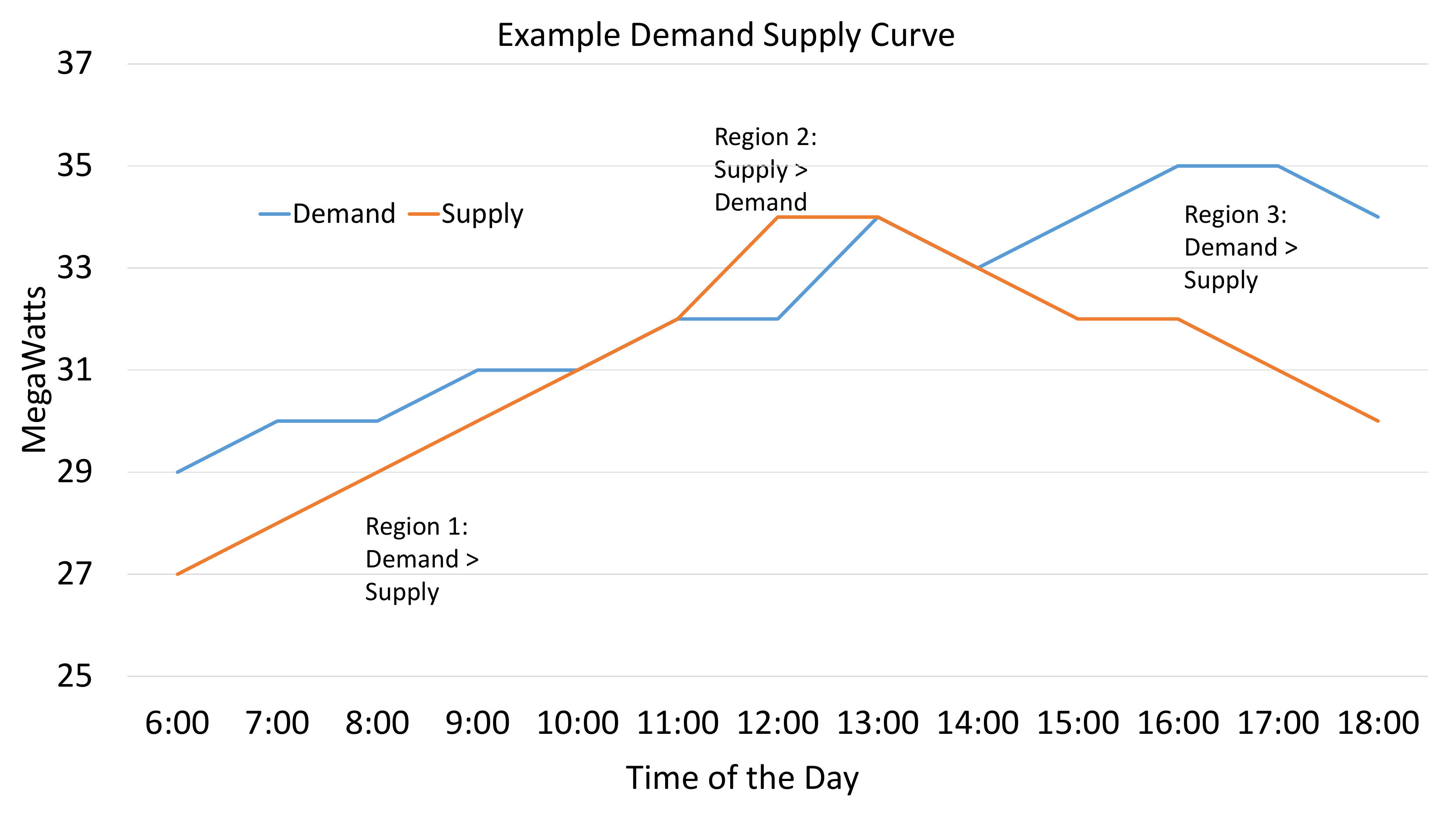}
%	\vspace{-0.5cm}
	\caption{Example Supply Demand Curve}
	\label{fig:fig1}
\end{figure}

\subsection{Demand Curtailment Strategies}
The Demand Curtailment model considered in this work is based on a real world Demand Response implementation in USC's Campus Microgrid~\cite{kuppannagariimplementation}. In the Smart Grid, each demand node is associated with several demand curtailment strategies. Examples of strategies include Global Zone Temperature Reset (GTR), Variable Frequency Drive Speed Reset (VFD), Equipment Duty Cycling (Duty) and their combinations~\cite{kuppannagariimplementation}. Each curtailment strategy for each node in a given time interval exhibits a discrete curtailment value. This complicates the problem as Linear Programming based techniques, which are both fast and optimal can no longer be used. This value can be predicted using algorithms mentioned in~\cite{aman2016learning}. Each node is also associated with a default curtailment strategy of curtailment value 0. Hence, if a node has no curtailment strategy available or does not participate in demand curtailment, we assume that it follows the default strategy. Demand curtailment is known in the literature as Demand Response~\cite{albadi2007demand}. 
     
\subsection{Solar Curtailment Strategies}
Each supply node i.e., a node with PV installation in the Smart Grid consists of several solar panels (each solar panel is called a module). Traditionally, modules are connected in series to an inverter which in turn is connected to the grid. However, this topology affects the efficiency of the PV system as the inverter conditions the output according to the poorest performing module~\cite{gagrica2015microinverter}. Therefore newer designs, in which each module is independently connected to a micro-inverter are becoming increasingly popular~\cite{deline2010partially}.

Technically, each micro-inverter of a PV installation is an independent grid connected generator, turning the PV installation into a segmented generator with discrete generation output. The maximum output of the PV installation will occur when all the solar panels are allowed to feed into the grid. However, at any given time, micro-inverters can be configured such that only a subset of PV modules are connected to the grid. Our objective is to exploit this capability by controlling the micro-inverter configuration and enabling discrete curtailment of supply. We refer the reader to~\cite{gagrica2015microinverter} for more details on utilizing micro-inverters for solar curtailment. Note that the technique developed in~\cite{gagrica2015microinverter} is a reactive technique which reacts to voltage increase and requires high frequency voltage sampling. Our technique is a proactive technique which avoids an increase in voltage by reducing supply in advance.       

%In Smart Grids with high PV penetration, certain periods can exhibit solar generation higher than the demand of the grid. If no action is taken under such scenarios, over-voltages will occur leading to inverter trips and thus no feed-in from the corresponding solar PVs~\cite{gagrica2015microinverter}. Volt-Var optimization is used to reduce over-voltages by increasing the reactive power. However, it requires inverters with largely overrated capacity and causes losses and deteriorates the power quality~\cite{gagrica2015microinverter}. Hence, we leverage micro-inverter based solar curtailment techniques~\cite{gagrica2015microinverter}.    

%Each supply node in the Smart Grid with a PV installation consists of one or more solar panels. Based on the type of the solar PVs and the weather conditions, each solar panel contributes to a certain amount of generation from the entire installation at any given time. Micro-inverters, at any given time, can be used to select which of the solar panels' output is fed into the Smart Grid and which of them are disconnected from the grid. The maximum output, for any given time, of the PV installation will occur when all the solar panels are allowed to feed into the grid. Hence, the PV installation can exhibit discrete solar curtailment values which can be determined by the available configurations of disconnecting the solar panels from the Smart Grid.  

\subsection{Curtailment Cost}
Each curtailment strategy for each node is associated with a cost value as curtailment leads to a loss in utility. These costs are determined by the grid operator to reflect the loss. Typically, the costs are some function of the curtailment value e.g., if a node, by following a strategy curtails $\gamma$, then the cost of this strategy will be $f(\gamma)$, where $f$ is some function determined by the grid operator. Linear and quadratic functions are commonly used cost functions in grid operations. The objective of our framework is to minimize cost while performing net-load balancing.

\begin{figure}[h!]
	\centering
	\includegraphics[width=0.99\linewidth]{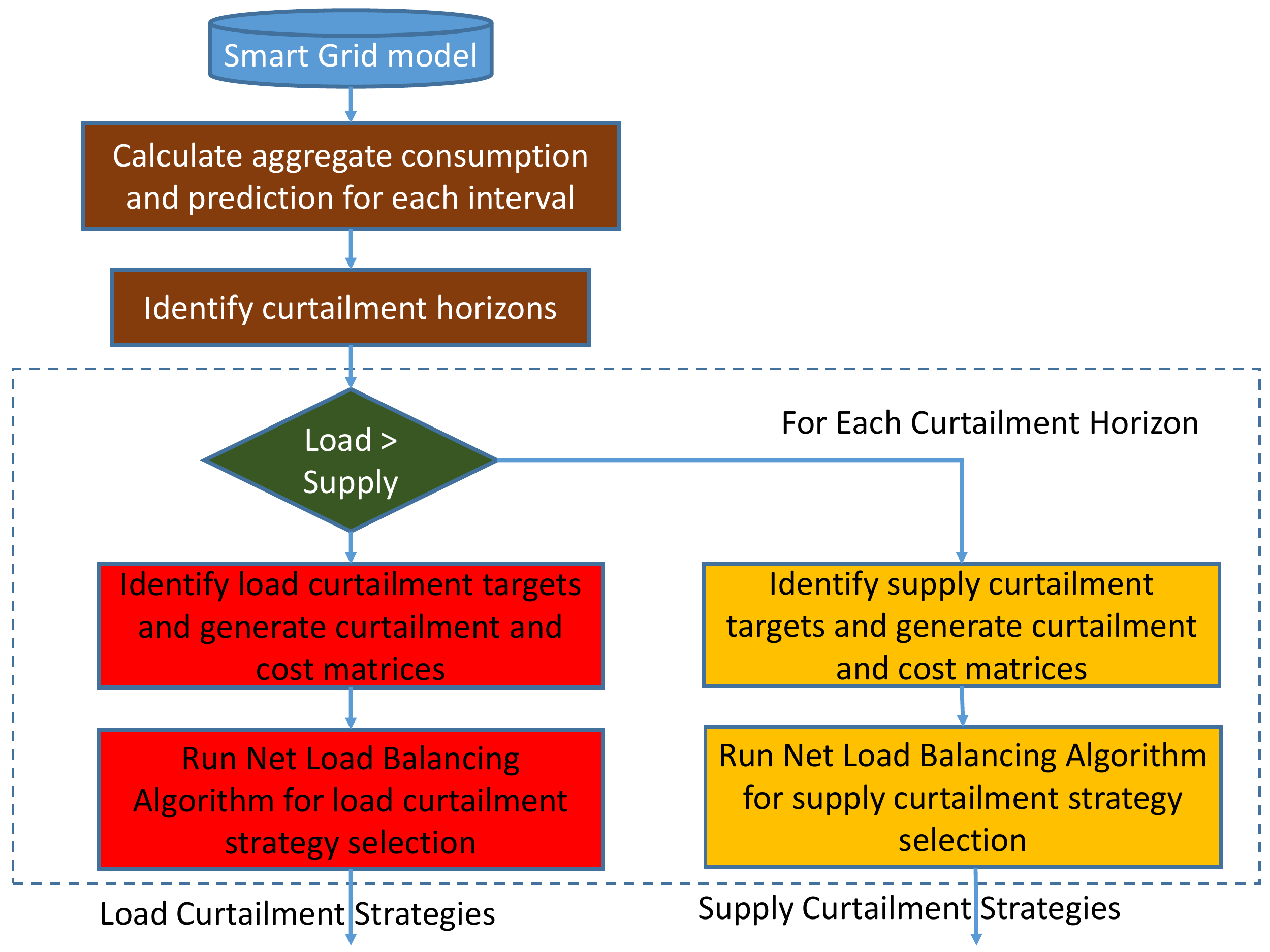}
%	\vspace{-0.4cm}
	\caption{High Level Overview of Net-Load Balancing Framework}
	\label{fig:flowchart}
\end{figure}

\subsection{Net-Load Balancing Framework}
In this work, we develop a generalized framework which performs net-load balancing by selecting load or supply curtailment strategies. We define net-load balancing horizon as the time horizon during which the net-load balancing framework is used. Net-load balancing horizon is divided into several smaller curtailment horizons. A curtailment horizon is defined as a period of time during which either demand is higher than the supply requiring a demand curtailment or vice-versa. 

\subsubsection{Model} In our net-load balancing framework, each supply node of the Smart Grid is associated with a generation prediction model such as ARIMA+ANN ensemble~\cite{ramsami2015hybrid}. Similarly, each demand node is associated with a demand prediction model such as ARIMA~\cite{aman2015prediction}. Each node is also associated with a curtailment prediction model~\cite{aman2016learning}. Determining the best prediction model for each node is a separate research topic and is out of the scope of this work. This work uses historical load, generation and curtailment prediction data.

\subsubsection{Method} The flow chart in Figure~\ref{fig:flowchart} gives a high level overview of our net-load balancing framework. The framework determines the aggregate load and supply for each interval in the net-load balancing horizon. It then identifies a list of load curtailment horizons and supply curtailment horizons and the respective curtailment targets (curtailment target calculation is discussed in Section~\ref{ssec:sgm}). For each interval of every curtailment horizon, the net-load balancing framework uses the curtailment prediction models to determine discrete curtailment values for each node. Using the discrete curtailment values, it also generates the cost values associated with them. Then, for each curtailment horizon, it runs one of the net-load balancing algorithms detailed in Section~\ref{sec:NLB}. The algorithm to run is pre-determined by the grid operator. The algorithm returns the curtailment strategies to be followed by each node in each time interval of each curtailment horizon. The model, objective and the algorithms for each curtailment horizon are formally discussed in Section~\ref{sec:NLB}.      

We realize that the prediction models incur a prediction error which creates uncertainties in net-load balancing solutions. In this work, we do not address these uncertainties. We plan to address these issues in our future work. 

\begin{table}
	\caption{List of Variables in the Models}
	%	\vspace{-0.4cm}
	\begin{tabular}{|l|p{6.5 cm}|}
		\hline
		Variable & Meaning \\ \hline
		$M$ & Number of nodes \\
		$N$ & Number of curtailment strategies \\
		$T$ & Number of time intervals in curtailment horizon \\
		$\gamma_{bj}(t)$ & Curtailment achieved by node $b$ following curtailment strategy $j$ at time $t$ \\
		$c_{bj}(t)$ & Cost of node $b$ following curtailment strategy $j$ at time $t$. Essentially, cost associated with $\gamma_{bj}(t)$ \\
		$x_{bj}(t)$ & 0-1 decision variable which denotes whether node $b$ should follow (1) strategy $j$ at time $t$ or not (0) \\
		$\Gamma_t$ & Curtailment target for interval $t$ \\
		$\Gamma$ & Upper bound on the curtailment achieved in the curtailment horizon \\
		$\alpha_b B_b, B_b$ & Lower bound and upper bound on the curtailment budget for node $b$ \\
		$\Theta_t$ & Dynamic programming recursion function (and table) used in Algorithm 1 \\
		$\Phi$ & Dynamic programming recursion function (and table) used in Algorithm 2 \\ 
		$\Gamma_{min}$ & Smallest curtailment target among all the intervals \\
		$ \mu $ & Constant used to round curtailment values in the approximation algorithm. Defined as $\frac{\epsilon \Gamma_{min}}{M}$\\
		$\widehat{\gamma}_{bj}(t)$ &  Rounded curtailment values. Defined as $\widehat{\gamma}_{bj}(t) = \lceil \frac{\gamma_{bj}(t)}{\mu} \rceil$ \\
		\hline
	\end{tabular}
	
	\label{variables}
\end{table}

\section{Net-Load Balancing Algorithms}
\label{sec:NLB}
\subsection{SmartGrid Model for Curtailment Selection}
\label{ssec:sgm}
As per our model, the Smart Grid consists of $M$ nodes. For each node, there are $N$ curtailment strategies available. Let $T$ be the number of time intervals in the curtailment horizon i.e., the intervals during which we schedule the curtailment. We are given a time varying curtailment matrix $\gamma(t) \in R^{M \times N}$ with element $\gamma_{bj}(t)$ denoting the discrete curtailment obtained by node $b$ following curtailment strategy $j$ at time $t \in \{1,\dots,T\}$. For each time $t$, we are also given a cost matrix $C(t) \in R^{M \times N}$ where $c_{bj}(t)$ denotes the cost associated with node $b$ following curtailment strategy $j$. Let $X(t)$ be the decision matrix. An element $x_{bj}(t) = 1$ if node $b$ follows curtailment strategy $j$ in interval $t$ and 0 otherwise. 
For each interval $t$, we are given a curtailment target $\Gamma_t$ by the net-load balancing framework calculated by taking the difference between the aggregate supply and demand. $\Gamma_t$ represents the desirable curtailment target for each period, however, it might be exceeded. In order to limit wasteful curtailment, we are also given $\Gamma$, which denotes the upper bound on the achieved curtailment in the curtailment horizon. The notations used in the following sections are summarized in Table~\ref{variables}.

\subsection{Minimum Cost Net-Load Balancing}
Given the Smart Grid Model above, the objective of the Minimum Cost Net-Load Balancing Algorithm is to determine node-curtailment strategy pairs for each interval of the curtailment horizon such that: (1) The curtailment target $\Gamma_t$ for each interval $t$ is achieved, (2) the cost is minimized for the entire curtailment horizon, and (3) the aggregate curtailment across the entire curtailment horizon is no more than $\Gamma$. 

This problem, as we show in Section~\ref{sec:nphard} in the appendix is NP-hard. We first formulate the problem using an Integer Linear Program (ILP). However, the time complexity for solving ILPs is exponential. Hence, we develop a polynomial time approximation algorithm for the same.

More formally, we develop an algorithm with a runtime which is polynomial in the input size $M, N$ and $T$ and $\frac{1}{\epsilon}$, where $\epsilon$ is an approximation guarantee (accuracy) parameter, which ensures that objective of the problem is minimized and in the worst case the constraints are violated by a maximum factor of $(1\pm\epsilon)$. To develop the approximation algorithm, we use a dynamic programming algorithm (Equation~\ref{eqn:dp}) which for each interval $t$, determines the cost of achieving various curtailment values, each of which is $\geq \Gamma_t$. We then use another dynamic programming algorithm (Equation~\ref{eqn:dp2}) to combine the results of each interval to achieve an aggregated curtailment value of $\leq \Gamma$ with minimum cost. The sizes of the tables of both the dynamic programming algorithms are proportional to the maximum possible cost. This leads to very large runtime to solve the problem exactly. Hence, we scale and round the costs. The scaling and rounding causes several curtailment values to be indistinguishable, this introduces error into the value of our solution. Thus the resulting algorithm is an approximation algorithm (as opposed to exact algorithm) which produces an approximate solution in polynomial time which is independent of the maximum cost. We provide the worst case approximation guarantee for the algorithms.  

\subsubsection{ILP Formulation}
The ILP formulation for the Minimum Cost Net-Load Balancing problem is as follows:

\begin{flalign}
Minimize: & \sum_{b=1}^{M}\sum_{j=1}^{N}\sum_{t=1}^{T} c_{bj}(t) x_{bj}(t)& \label{eqn:mcnlbobj}\\
s.t. & \sum_{b=1}^{M}\sum_{j=1}^{N} \gamma_{bj}(t) x_{bj}(t) \geq \Gamma_t & \forall t \label{eqn:achievegamma}\\
& \sum_{b=1}^{M}\sum_{j=1}^{N}\sum_{t=1}^{T} \gamma_{bj}(t) x_{bj}(t) \leq \Gamma &  \label{eqn:gammaless}\\
& \sum_{j=1}^{N}x_{bj}(t) == 1 & \forall b,t \label{eqn:onestrat}\\
& x_{bj}(t) \in \{0,1\}& \forall b, j, t \label{eqn:int}
\end{flalign}

Equation~\ref{eqn:achievegamma} ensures that the curtailment target for each time interval is achieved. Equation~\ref{eqn:gammaless} ensures that the aggregate curtailment is less than the maximum limit $\Gamma$. Equation~\ref{eqn:onestrat} ensures that each node in each time interval follows exactly one strategy (possibly the default strategy with 0 curtailment value).

\subsubsection{Approximation Algorithm}
Let $\Gamma_{min} = \min_t\{\Gamma_t\}$ be the smallest curtailment target among all the intervals. Define $\mu = \frac{\epsilon \Gamma_{min}}{M}$. For each $\gamma_{bj}(t)$, define $\widehat{\gamma}_{bj}(t) = \lceil  \frac{\gamma_{bj}(t)}{\mu}\rceil$. Similarly, define $\widehat{\Gamma}$ and $\widehat{\Gamma_t} \forall t$. We refer to these values as rounded curtailment values.

For each interval $t$, we define a function $\Theta_t: \mathcal{Z}^+\cup\{0\} \times  \{1,\dots,M\}$. $\Theta_t(\widehat{\gamma}, b)$  denotes the minimum cost required to achieve a curtailment value of $\widehat{\gamma}$ using nodes  $1,\dots,b$ where $b \in \{1,\dots,M\}$. $\Theta_t$ can be defined recursively as:

\begin{flalign}
\Theta_t(\widehat{\gamma}, b) = \begin{cases}
\min_j\{c_{bj}(t)\}& \text{if } b = 1 \text{ and }\\ 
&\exists j\;|\; \widehat{\gamma} = \widehat{\gamma}_{bj}(t) \\
\infty & \text{if } b = 1 \text{ and } \\
&\widehat{\gamma} != \widehat{\gamma}_{bj}(t)\;\forall j \\
\infty & \text{if } \widehat{\gamma} < 0 \\
\min_j\{ \Theta_t(\widehat{\gamma} - \widehat{\gamma}_{bj}(t),  b-1)  + \widehat{\gamma}_{bj}(t)  \} & \text{otherwise} 
\end{cases} \label{eqn:dp}
\end{flalign}

\begin{algorithm}
	\label{algorithm:algo1}
	\caption{Determine node strategy pairs given $(c, \widehat{\gamma}) \in S_t$}
	\KwIn{$(c, \widehat{\gamma})$, $t$}
	$x_{bj}(t) \in X(t) \gets 0 \forall b, j$ \\
	
	$\gamma_{cur} \gets \widehat{\gamma}$ \\
	\For{$b = M \text{ to } 1$} 
	{ \label{line:for}
		\eIf{$b \neq 1$}
		{
			$j \gets \argmin_j\{\Theta(\gamma_{cur} - \widehat{\gamma}_{bj}(t), b-1) + c_{bj}(t) \}$	\label{line:argmaxj}\\
		}
		{
			$j \gets j\; |\; \widehat{\gamma}_{bj}(t) == \gamma_{cur}$	
		}
		$x_{bj}(t) \gets 1$ \\
		$\gamma_{cur} \gets \gamma_{cur} - \widehat{\gamma}_{bj}(t)$ \\
	}			
	
	\KwOut{Output $X(t)$, the list of curtailment strategies to be followed by each node in interval $t$.}
\end{algorithm}

\begin{algorithm}[!ht]
	\label{algo:Algo2}
	\caption{Minimum Cost Net-Load Balancing Algorithm}
	\KwIn{$C(t), \gamma(t), \Gamma_t, \forall t, \Gamma$}
	Compute $\widehat{\gamma}_{bj}(t), \widehat{\Gamma}_t \forall t, \widehat{\Gamma}$ \\
	Fill the table $\Theta_t \forall t$ using equation~\ref{eqn:dp} \label{line:filltheta}\\
	Compute $S_t = \{(\Theta_t (\widehat{\gamma},M), \widehat{\gamma})| \widehat{\gamma} \geq \widehat{\Gamma_t}\} \forall t$ \label{line:computest}\\
	Fill the table $\Phi$ using equation~\ref{eqn:dp2} \\
	$\widehat{\gamma}_{cur} \gets \argmin_{\widehat{\gamma}} \{ \Phi(\widehat{\gamma}, T)  |  \widehat{\gamma} \leq \widehat{\Gamma} \}$ \label{line:gammaless} \\
	\If{$\widehat{\gamma}_{cur} == \phi$}{
		No curtailment strategies exist \\	
		Exit Algorithm \\
	}
	\For{$t=T$ to 1}{
		\eIf{$t \neq 1$}
		{
			$j \gets \argmin_{j\;|\;(c_j, \widehat{\gamma}_j) \in S_t}\;\; \{\Phi(\widehat{\gamma}_{cur} - \widehat{\gamma}_j, t-1) 
			+ c_j \}$	\\
		}
		{
			$j \gets j\;|\; \widehat{\gamma} = \widehat{\gamma}_{j}, (c_j, \widehat{\gamma}_j) \in S_t$	\\
		}	
		Run Algorithm 1 with $(c_j,\widehat{\gamma}_j)$ to get $X(t)$ \\
		$\widehat{\gamma}_{cur} \gets \widehat{\gamma}_{cur} - \widehat{\gamma}_j$ \\
	}
	\KwOut{$X(t) \forall t$, the list of curtailment strategies to be followed by each node in each interval}
\end{algorithm}

The dynamic program can be solved by creating a table of size $k \times M$, where $k = \widehat{\Gamma}$  for each interval. For notational simplicity, we refer to table using the same variable $\Theta_t$ as the recursive function. Once the table is filled, for each interval $t$, we define a set $S_t = \{(\Theta_t (\widehat{\gamma},M), \widehat{\gamma})| \widehat{\gamma} \geq \widehat{\Gamma_t}\}$. For any element $(\Theta_t (\widehat{\gamma},M), \widehat{\gamma}) \in S_t$, Algorithm 1 can be used to determine the strategies to be followed by each node to achieve $\widehat{\gamma}$ with cost $\Theta_t (\widehat{\gamma},M)$ in the interval $t$.

Now, given $S_t \forall t \in \{1,\dots,T\}$, we need to select exactly one element $e_t=(c_t,\widehat{\gamma_t}) \in S_t \forall t$ such that $\sum_{t=1}^{T}\widehat{\gamma_t} \leq \widehat{\Gamma}$ and $\sum_{t=1}^{T}c_t$ is minimized. We define a function $\Phi: \mathcal{Z}^+\cup\{0\} \times \{1,\dots,T\}$. $\Phi(\widehat{\gamma}, t)$ denotes the minimum cost required to achieve the curtailment value of $\widehat{\gamma}$ and time intervals $1,\dots,t$. $\Phi$ can be defined recursively as: 

\begin{flalign}
\Phi(\widehat{\gamma}, t) = \begin{cases}
\min_j\{c_j\} & \text{if } t = 1 \text{ and } \\
& \exists j\;|\;\widehat{\gamma} = \widehat{\gamma}_{j}, (c_j, \widehat{\gamma}_j) \in S_t \\
\infty & \text{if } t = 1 \text{ and }\\
& \widehat{\gamma} != \widehat{\gamma}_{j}\; \forall j\;|\;(c_j, \widehat{\gamma}_j) \in S_t \\
\infty & \text{if } \widehat{\gamma} < 0 \\
\min_{j\;|\;(c_j, \widehat{\gamma}_j) \in S_t} \;\; \{\Phi(\widehat{\gamma} - \widehat{\gamma}_j, t-1) \;\;\;+ \widehat{\gamma}_j \}  & \text{otherwise}
\end{cases} \label{eqn:dp2}
\end{flalign}

This dynamic program requires a table of size $k \times T$ , where $k = \widehat{\Gamma}$. Again, for notational simplicity, we refer to table using the same variable $\Phi$ as the recursive function. Algorithm 2 can be used to determine the curtailment achieved in each interval and the corresponding node strategy pairs.

\begin{theorem}	\label{th:FPTA1}
	Algorithm 2 is a polynomial time algorithm for minimum cost net-load balancing which in the worst case violates the maximum curtailment constraint (Equation~\ref{eqn:gammaless}) by at most $(1+\epsilon)$ factor and violates the per interval curtailment target constraint (Equation~\ref{eqn:achievegamma}) by at most $(1 - \epsilon)$ factor. 
\end{theorem}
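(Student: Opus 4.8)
The plan is to regard Algorithm 2 as solving the \emph{rounded} problem exactly by dynamic programming, and then to translate the rounded feasibility it guarantees back into approximate satisfaction of the two original constraints, Equations~\ref{eqn:achievegamma} and~\ref{eqn:gammaless}. I would split the argument into three parts: correctness of the two recursions, the two constraint-violation bounds, and the running-time count. First I would verify by induction that the recursions compute what the reconstruction steps assume. For a fixed $t$, induction on $b$ shows that $\Theta_t(\widehat{\gamma},b)$ equals the minimum total cost of assigning one strategy to each of nodes $1,\dots,b$ (this per-node choice is exactly constraint~\ref{eqn:onestrat}, with each chosen pair contributing its cost $c_{bj}(t)$ and infeasible targets carrying cost $\infty$) so that the rounded curtailments sum to exactly $\widehat{\gamma}$. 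Hence every element $(\Theta_t(\widehat{\gamma},M),\widehat{\gamma})\in S_t$ with $\widehat{\gamma}\geq\widehat{\Gamma_t}$ is a genuine interval assignment meeting the rounded per-interval target, which Algorithm 1 then recovers. An analogous induction on $t$ shows $\Phi(\widehat{\gamma},t)$ is the minimum cost of picking one element from each of $S_1,\dots,S_t$ whose rounded curtailments sum to $\widehat{\gamma}$, so choosing the cheapest $\Phi(\widehat{\gamma},T)$ with $\widehat{\gamma}\leq\widehat{\Gamma}$ returns a cost-optimal solution of the rounded problem.

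The core is the rounding arithmetic, driven by the ceiling inequalities $\gamma_{bj}(t)\leq\mu\,\widehat{\gamma}_{bj}(t)<\gamma_{bj}(t)+\mu$ together with $M\mu=\epsilon\Gamma_{min}$ and $\Gamma_{min}\leq\Gamma_t\leq\Gamma$ (the last holding by feasibility, since any feasible aggregate is at least $\Gamma_{min}$ and at most $\Gamma$). For the aggregate bound I would use only the left inequality: the returned solution satisfies $\sum_t\widehat{\gamma}_t\leq\widehat{\Gamma}$, so the true aggregate is at most $\mu\sum_t\widehat{\gamma}_t\leq\mu\widehat{\Gamma}=\mu\lceil\Gamma/\mu\rceil<\Gamma+\mu\leq(1+\epsilon)\Gamma$, giving the claimed $(1+\epsilon)$ violation of Equation~\ref{eqn:gammaless}. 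For the per-interval bound I would apply the right inequality on each of the \emph{exactly} $M$ pairs selected in interval $t$ (this is precisely where constraint~\ref{eqn:onestrat} is needed): the true curtailment in $t$ exceeds $\mu\sum_{b,j}\widehat{\gamma}_{bj}(t)x_{bj}(t)-M\mu\geq\mu\widehat{\Gamma_t}-M\mu\geq\Gamma_t-\epsilon\Gamma_{min}\geq(1-\epsilon)\Gamma_t$, giving the claimed $(1-\epsilon)$ violation of Equation~\ref{eqn:achievegamma}.

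Finally, the running time follows by counting: $\Theta_t$ is a $\widehat{\Gamma}\times M$ table whose cells each cost $O(N)$, each $S_t$ holds $O(\widehat{\Gamma})$ entries, and $\Phi$ is a $\widehat{\Gamma}\times T$ table whose cells each minimize over $O(\widehat{\Gamma})$ options, so the total is polynomial in $\widehat{\Gamma}=\lceil M\Gamma/(\epsilon\Gamma_{min})\rceil$, $M$, $N$, and $T$, with reconstruction only linear on top. The main obstacle I anticipate is bookkeeping the \emph{direction} in which ceiling rounding pushes each constraint: it inflates the aggregate, so the $(1+\epsilon)$ slack must be absorbed by the single ceiling on $\widehat{\Gamma}$, whereas it can only deflate each per-interval total, so the $(1-\epsilon)$ slack must be controlled by ensuring the per-pair error $\mu$ accumulates over exactly $M$ terms with $M\mu=\epsilon\Gamma_{min}\leq\epsilon\Gamma_t$. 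A secondary subtlety worth flagging is that, because ceiling rounding can render the original optimum rounded-\emph{infeasible} for the aggregate, ``minimum cost'' here should be read as optimality over the rounded-feasible region rather than as a cost ratio against the original optimum.
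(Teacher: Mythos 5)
Your proof is correct and follows essentially the same route as the paper's: DP exactness established by induction, the identical ceiling-rounding arithmetic (using $M\mu = \epsilon\Gamma_{min}$ and $\Gamma_{min}\leq\Gamma_t\leq\Gamma$) to get the $(1-\epsilon)$ per-interval and $(1+\epsilon)$ aggregate bounds, and the same table-size runtime count (the paper additionally assumes $\Gamma = O(T\Gamma_{min})$ to make $\widehat{\Gamma}$, and hence the runtime, polynomially bounded in $M,N,T,1/\epsilon$). Your closing caveat---that ceiling rounding can make an originally feasible optimum violate the rounded aggregate cap $\widehat{\Gamma}$, so ``minimum cost'' must be read as optimality over the rounded-feasible region---is actually a point where you are more careful than the paper, whose ``bucket'' covering argument asserts that no lower-cost solution is missed without addressing that the rounded image of an originally feasible solution can exceed $\widehat{\Gamma}$ by up to $MT-1$ units.
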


The proof of this theorem is discussed in the appendix in Section~\ref{sec:proofFPTA1}.

\subsection{Minimum Cost Net-Load Balancing with Fairness}
%Motivation for fairness and description of the problem. ILP formulation. problems with ILP. High level description of the approximation algorithm. Actual Algo. Theorem. Move proofs to appendix.

Curtailment from a node leads to a loss of utility for the node. Hence, it would be unfair to force some nodes to incur losses due to high curtailment while leaving others with minimal curtailment. The Minimum Cost Net Load Balancing Algorithm discussed in the previous section does not take fairness into account and can lead to solutions with uneven curtailment values from the nodes. We address the issue of fairness in this section by assigning a curtailment budget range (which can be set by the grid operator) to each node. The algorithm, by ensuring that no node incurs a curtailment more or less than its budgeted range over each curtailment horizon, ensures that net-load balancing is done in a fair manner.

Similar to the previous problem, we first develop an ILP formulation for this problem. We then relax the ILP into a Linear Program (which can be solved in polynomial time) and round back the results to integers. This rounding, however, violates certain constraints and increases the objective value. We provide guarantees on the worst case violation of the constraints and the increase in the objective value in the worst case. 

\subsubsection{ILP Formulation} 
\label{sssec:fairilp}
Let $[\alpha_b B_b, B_b]$ be the curtailment budget for node $b$ with $\alpha_b \in [0,1]$. Both $B_b$ and $\alpha_b$ are determined by the grid operator. The problem of minimum cost net-load balancing with fairness can be formulated using the following ILP: 
\begin{flalign}
Minimize: & \sum_{b=1}^{M}\sum_{j=1}^{N}\sum_{t=1}^{T} c_{bj}(t) x_{bj}(t)& \label{eqn:mcnlfbobj}\\
s.t. & \sum_{b=1}^{M}\sum_{j=1}^{N} \gamma_{bj}(t) x_{bj}(t) \geq \Gamma_t & \forall t \label{eqn:mcnlfachievegamma}\\
& \sum_{b=1}^{M}\sum_{j=1}^{N}\sum_{t=1}^{T} \gamma_{bj}(t) x_{bj}(t) \leq \Gamma &  \label{eqn:mcnlfgammaless}\\
& \alpha_b B_b \leq \sum_{j=1}^{N}\sum_{t=1}^{T} \gamma_{bj}(t) x_{bj}(t) \leq B_b& \forall b \label{eqn:mcnlffair}\\
& \sum_{j=1}^{N}x_{bj}(t) == 1 & \forall b,t \label{eqn:mcnlfonestrat}\\
& x_{bj}(t) \in \{0,1\}& \forall b, j, t \label{eqn:mcnlfint}
\end{flalign}

Equation~\ref{eqn:mcnlffair} is the additional constraint added that ensures that each node curtails an amount within its budgeted range. 

\subsubsection{Approximation Algorithm}
The ILP formulated in Section~\ref{sssec:fairilp}, when relaxed to a Linear Program will lead to unbounded errors. Hence,  to develop an approximation algorithm with theoretical worst case bounds, we first make the following assumption: The costs $c_{bj}(t)$ are a function of $\gamma_{bj}(t)$ i.e., $c_{bj}(t) = f(\gamma_{bj}(t))$. We will derive approximation guarantees when the function $f$ is linear and when it is quadratic.

In order to develop a bounded approximation algorithm, we first relax the ILP to a linear program i.e., we replace Equation~\ref{eqn:mcnlfint} with $0 \leq x_{bj}(t) \leq 1\;\forall b,j,t$ and solve the Linear Program. Solving a linear program takes polynomial amount of time using methods such as inter-point or ellipsoid~\cite{boyd2004convex}. However, the solution will contain fractional values for the decision variables $x_{bj}(t),\;\forall b, j, t$ which need to be rounded to 0 or 1 for a valid solution. Now, naively rounding the decision variables leads to errors which are unbounded. Hence, we develop Algorithm 3 which is a novel rounding algorithm which guarantees that the constraints are violated by at most a factor of 2 in the worst case. For each $b, t$, the algorithm works by computing expected curtailment $\gamma' = \sum_{j=1}^{N}\gamma_{bj}(t)x_{bj}(t)$ and rounding it to the curtailment value $\gamma_{bj}(t)$ nearest to it. We have the following two results for this algorithm. The proofs are discussed in the appendix.

\begin{theorem}
	For a linear cost function $f$, Algorithm 3 is a (2,2)-factor Minimum Cost Net Load Balancing with Fairness Algorithm. The cost of the solution achieved by Algorithm 3 is at most twice the optimal while the budget and targeted curtailment constraints (Eqs~\ref{eqn:mcnlfgammaless} and~\ref{eqn:mcnlffair}) are violated by at most a factor of two. 
\end{theorem}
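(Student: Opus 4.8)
The plan is to use the standard relax-and-round paradigm: solve the LP relaxation (which lower-bounds the optimal ILP cost, since the LP feasible region contains the ILP one and both minimize), then analyze the deterministic rounding of Algorithm~3. Write $\gamma'_b(t)=\sum_{j}\gamma_{bj}(t)x_{bj}(t)$ for the fractional (expected) curtailment of node $b$ in interval $t$; because $\sum_j x_{bj}(t)=1$ with $x\ge 0$, this $\gamma'_b(t)$ is a convex combination of the available values $\{\gamma_{bj}(t)\}_j$ and hence lies in $[\min_j\gamma_{bj}(t),\max_j\gamma_{bj}(t)]$. Algorithm~3 replaces it by the nearest available value $\tilde\gamma_b(t)$, which equals $\sum_j\gamma_{bj}(t)$ for the single strategy it activates. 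Everything then reduces to one per-pair estimate comparing $\tilde\gamma_b(t)$ to $\gamma'_b(t)$.

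The central lemma I would prove is the two-sided bound $\tfrac12\gamma'_b(t)\le\tilde\gamma_b(t)\le 2\gamma'_b(t)$ for every $b,t$. Let $\gamma_{lo}\le\gamma'_b(t)\le\gamma_{hi}$ be the available values straddling the convex combination. The upper direction is clean: if the nearest value is $\gamma_{lo}$ then $\tilde\gamma_b(t)=\gamma_{lo}\le\gamma'_b(t)$ trivially, while if it is $\gamma_{hi}$ then being nearest forces $\gamma'_b(t)\ge(\gamma_{lo}+\gamma_{hi})/2\ge\gamma_{hi}/2$ (using $\gamma_{lo}\ge 0$), so $\tilde\gamma_b(t)=\gamma_{hi}\le 2\gamma'_b(t)$. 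I would argue the lower direction symmetrically. Granting the lemma, the constraint bounds follow by summation: summing $\tilde\gamma_b(t)\le 2\gamma'_b(t)$ over all $b,t$ and invoking Eq.~\ref{eqn:mcnlfgammaless} gives total rounded curtailment $\le 2\Gamma$; summing over $t$ for a fixed node and invoking both sides of Eq.~\ref{eqn:mcnlffair} gives $\tfrac12\alpha_bB_b\le\sum_t\tilde\gamma_b(t)\le 2B_b$. Thus the targeted-curtailment cap and the budget range are each violated by at most a factor of two.

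For the cost guarantee I would exploit linearity of $f$. Writing $c_{bj}(t)=f(\gamma_{bj}(t))=a\,\gamma_{bj}(t)+b_0$ with $a,b_0\ge 0$, the affine part contributes exactly $b_0MT$ to every feasible assignment (each $(b,t)$ selects one strategy, by Eq.~\ref{eqn:mcnlfonestrat}), so it cancels from the comparison; the variable part is $a\sum_{b,t}\gamma'_b(t)$ for the LP and $a\sum_{b,t}\tilde\gamma_b(t)$ after rounding. Applying $\tilde\gamma_b(t)\le 2\gamma'_b(t)$ term-by-term yields rounded cost $\le 2\cdot(\text{LP cost})\le 2\cdot\mathrm{OPT}$, establishing the factor-two cost bound and completing the $(2,2)$ claim.

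The step I expect to be the main obstacle is the \emph{lower} half of the per-pair lemma, $\tilde\gamma_b(t)\ge\tfrac12\gamma'_b(t)$. The round-up case is immediate ($\tilde\gamma_b(t)=\gamma_{hi}\ge\gamma'_b(t)$), but the round-down case requires $\gamma_{lo}\ge\tfrac12\gamma'_b(t)$, whereas the nearest-value condition alone only gives $2\gamma'_b(t)\le\gamma_{lo}+\gamma_{hi}$ — which can fail when a node owns a default strategy of value $0$ and the gap $\gamma_{hi}-\gamma_{lo}$ is large, so that $\gamma'_b(t)$ rounds down to well below half its value. To close this gap I would appeal to the structure of a basic feasible (vertex) LP solution: the number of $(b,t)$ pairs assigned fractionally is bounded by the number of non-assignment side constraints ($T+2M+1$), so most pairs are already integral with zero rounding error, and I would bound the aggregate fairness deviation caused by the few genuinely fractional pairs rather than rely on a uniform per-pair ratio. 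Making this accounting yield exactly the factor of two — or isolating the extra assumption on the curtailment values that excludes the zero-gap pathology — is where the real work of the proof lies.
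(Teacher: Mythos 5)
Your upper-bound and cost analysis is exactly the paper's argument: solve the LP relaxation, round each fractional curtailment $\gamma'_{bt}$ to the nearest available strategy value, and observe that rounding \emph{up} can only happen when $\gamma'_{bt} \geq (\gamma^i_{bt}+\gamma^{i+1}_{bt})/2 \geq \gamma^{i+1}_{bt}/2$, so the rounded value is at most $2\gamma'_{bt}$; summing over $b,t$ gives the factor-2 violation of Equation~\ref{eqn:mcnlfgammaless} and of the upper half of Equation~\ref{eqn:mcnlffair}, and linearity of $f$ turns the same per-pair inequality into the factor-2 cost bound (the paper states this last step even more tersely than you do; your cancellation of the affine part is a clean way to make it precise).

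The obstacle you single out --- the round-down case, i.e.\ the lower half of Equation~\ref{eqn:mcnlffair} (and Equation~\ref{eqn:mcnlfachievegamma}) --- is a genuine gap, and you should know that the paper does \emph{not} close it unconditionally either. Its proof introduces an explicit extra hypothesis, namely that consecutive curtailment values satisfy $\gamma^{i+1}_{bt} \leq (2k-1)\gamma^{i}_{bt}$, under which a rounded-down pair still satisfies $\gamma^{i}_{bt} \geq \gamma'_{bt}/k$ (take $k=2$ for the claimed factor of two), and it concedes outright that if $\gamma^{i}_{bt}=0$ no guarantee is possible. So the theorem is established only modulo precisely the zero-gap pathology you identified, and the paper's remedy is your second suggested route (isolate an assumption on the strategy values), not your first. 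Regarding your first route: the vertex-solution count is correct (at most $T+2M+1$ pairs $(b,t)$ are fractional at a basic optimal solution), but it cannot by itself recover a uniform factor of two, because nothing prevents the few fractional pairs from mattering disproportionately for a single node --- e.g.\ a node whose entire LP curtailment sits in one interval, split fractionally between the default strategy of value $0$ and a large value $V$ with $\gamma'_{bt}=0.49V$: nearest-value rounding sends it to $0$ and the lower budget $\alpha_b B_b$ is missed by an unbounded factor, regardless of how few such pairs exist. Closing the lower-bound direction therefore requires either the paper's assumption or a modified (budget-aware or randomized) rounding rule; the uniform two-sided per-pair lemma you hoped for is false in general.
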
   

\begin{theorem}
	For a quadratic cost function $f$, Algorithm 3 is (4,2)-factor algorithm.
\end{theorem}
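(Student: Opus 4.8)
The plan is to isolate exactly what changes when the cost function $f$ becomes quadratic rather than linear, and to reuse everything else from the preceding theorem (the linear case). The key observation is that Algorithm~3 rounds each expected curtailment $\gamma' = \sum_{j=1}^{N}\gamma_{bj}(t)\,x_{bj}(t)$ to the nearest realizable value $\gamma_{bj}(t)$, and this rounding rule depends only on the curtailment values and on the fractional solution $x_{bj}(t)$, never on the costs. Hence the integral curtailment assignment produced for a quadratic cost is \emph{identical} to the one produced in the linear case, and the three curtailment constraints (Eqs.~\ref{eqn:mcnlfachievegamma}, \ref{eqn:mcnlfgammaless} and~\ref{eqn:mcnlffair}) involve only $\gamma_{bj}(t)$ and $x_{bj}(t)$. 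I would therefore inherit the constraint-violation analysis verbatim: each upper-bounded constraint is violated by at most a factor $2$ because the nearest-value rounding satisfies $\widehat\gamma_{bt}\le 2\gamma'$ (if the rounding goes up, the up-distance is at most the distance to the lower straddling realizable value, which is nonnegative and hence at most $\gamma'$), and the corresponding bound holds for the budget constraint. This pins the second index of the $(4,2)$ guarantee at $2$, so all remaining effort concerns the cost.

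For the cost I would write $f(\gamma)=a\gamma^2$ and bound the rounded cost term by term. Since the relaxed program (Eqs.~\ref{eqn:mcnlfbobj}--\ref{eqn:mcnlfint} with $x_{bj}(t)\in[0,1]$) is a valid relaxation, its optimum $\mathrm{OPT}_{LP}$ is at most the optimal ILP cost $\mathrm{OPT}$. Fix a pair $(b,t)$; by Eq.~\ref{eqn:mcnlfonestrat} the weights $\{x_{bj}(t)\}_j$ form a probability distribution, so Jensen's inequality applied to the convex map $\gamma\mapsto a\gamma^2$ gives
\[
\sum_{j=1}^{N} a\,\gamma_{bj}(t)^2\,x_{bj}(t) \;\ge\; a\Big(\sum_{j=1}^{N}\gamma_{bj}(t)\,x_{bj}(t)\Big)^{2} \;=\; a\,(\gamma')^{2}.
\]
The left-hand side is precisely the LP cost contributed by $(b,t)$, while the cost charged after rounding is $a\,\widehat\gamma_{bt}^{\,2}$. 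Using $\widehat\gamma_{bt}\le 2\gamma'$ from the rounding bound, the rounded contribution is at most $a(2\gamma')^2 = 4a(\gamma')^2$, which by the displayed inequality is at most $4$ times the LP contribution of $(b,t)$. Summing over all $b$ and $t$ then yields total rounded cost $\le 4\,\mathrm{OPT}_{LP}\le 4\,\mathrm{OPT}$, establishing the cost factor $4$ and completing the $(4,2)$ claim.

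The step I expect to be the crux is the interaction between convexity and rounding in this per-term comparison. In the linear case the LP contribution of $(b,t)$ equals $a\gamma'$ \emph{exactly}, so the ratio is simply $\widehat\gamma_{bt}/\gamma'\le 2$; for the quadratic case the LP contribution is only \emph{lower} bounded by $a(\gamma')^2$ through Jensen, and it is this one-sided bound, combined with squaring the factor-$2$ rounding guarantee, that produces $4=2^2$ rather than $2$. The subtlety to get right is to charge the rounded cost against the quantity $a(\gamma')^2$ — which is dominated by the true LP term cost — rather than against the true LP term cost directly, since the latter can be strictly larger and would not tighten the argument. A minor point to verify at the end is robustness to a nonnegative linear term added to $a\gamma^2$: such a term is governed by the factor-$2$ analysis of the preceding theorem while the quadratic part is governed by the factor-$4$ analysis above, so the overall cost factor remains $4$.
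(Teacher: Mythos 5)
Your proposal is correct and takes essentially the same route as the paper: the nearest-value rounding bound $\widehat{\gamma}_{bt} \le 2\gamma'_{bt}$ gives the factor-$2$ violation of the upper-bound constraints, and squaring that bound gives the factor $4$ on the quadratic cost. The paper compresses the cost analysis into a single ``similarly it can be shown'' remark; your Jensen-type lower bound $\sum_{j} a\,\gamma_{bj}(t)^2 x_{bj}(t) \ge a\,(\gamma'_{bt})^2$, charged against the LP optimum, is exactly the detail needed to make that remark rigorous.
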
 

Note that the above guarantees are worst case guarantees. We discuss in the appendix the conditions under which these worst case guarantees occur. Knowing the worst case conditions and performance leaves the grid operator prepared for such scenarios. In practice, the performance is significantly better as shown using the experimental results.  

\begin{algorithm}
	\label{algo3}
	\caption{Minimum Cost Net-Load Balancing with Fairness}
	\KwIn{$C(t), \gamma(t), \Gamma_t, \forall t, \Gamma, B_b, \alpha_b \forall b$}
	$x_{bj}(t) \in X(t) \gets 0 \forall b, j, t$ \\
	Relax the ILP to an LP by replacing Equation~\ref{eqn:mcnlfint} with $0 \leq x_{bj}(t) \leq 1\;\forall b,j,t$ \\
	Solve LP to obtain solution $x^*_{bj}(t) \forall b, j, t$ \\
	\ForEach{$b,t$}{
		$\gamma' \gets \sum_{j=1}^{N}\gamma_{bj}(t)x^*_{bj}(t)$ \\
		Let $\gamma_{bi}(t) \leq \gamma' \leq \gamma_{bi+1}(t)$	\\
		\eIf{$(\gamma' - \gamma_{bi}(t)) \geq (\gamma_{bi+1}(t) - \gamma')$}{
		$x_{bi+1}(t) \gets 1$	
		}{
		$x_{bi}(t) \gets 1$
		}
	}
\KwOut{$X(t) \forall t$, the list of curtailment strategies to be followed by each node in each interval}
\end{algorithm}

\subsection{Online Algorithm for Fair Net-Load Balancing}
%Motivation for online algo. Actual Algo.
The algorithms discussed in the previous two sections require the availability of the curtailment prediction data for the entire horizon. However, certain scenarios require net-load balancing in an online manner. At the beginning of each interval, the data is made available and net-load balancing needs to be performed in a myopic way.

We develop a greedy online heuristic algorithm (Algorithm 4) for this problem. The algorithm finds a minimum cost way to achieve the curtailment target $\Gamma^o_l = \Gamma_t$ for the current interval while ensuring that no node curtails more than the budget for the current interval. The upper limit of the budget for the current interval $B^o_{b}$ for each node $b$ is determined by multiplying the ratio of $\frac{B_b}{\sum_{t=1}^{T}\Gamma_t}$ with $\Gamma^o_l$. The lower bound is simply $\alpha_b B_b$. The upper bound on curtailment $\Gamma^o_u$ is determined by multiplying the ratio of $\frac{\Gamma}{\sum_{t=1}^{T}\Gamma_t}$ with $\Gamma^o_l$. Let $\gamma^o$ denote the curtailment matrix and $C^o$ denote the cost matrix with $\gamma^o_{bj}$ being the curtailment for node $b$ following strategy $j$ and $c^o_{bj}$ being its cost. We remove any values $\gamma^o_{bj}$ which are outside the curtailment budget range from the curtailment matrix $\gamma^o$. Note that the values $\Gamma, \Gamma_t, B_b \forall b$ are unknown for the current net-load balancing horizon. They can be obtained from some past horizon.

\begin{algorithm}
	\caption{Online Algorithm for Fair Net-Load Balancing}
	\KwIn{$\Gamma^o_l, \gamma^o, C^o, \Gamma, \Gamma_t \forall t,  B_b, \alpha_b \forall b$}
%	\KwIn{$\Gamma^o_l, \Gamma^o_u, \gamma^o, C^o, B^o_b \forall b$}
	$\Gamma^o_u \gets \frac{\Gamma}{\sum_{t=1}^{T}\Gamma_t} \Gamma^o_l$ \\
	$B^o_b \gets \frac{B_b}{\sum_{t=1}^{T}\Gamma_t} \Gamma^o_l\;\forall b$ \\
	Compute $C^o_{min}, C^o_{max}, \widehat{c}^o_{bj}, \widehat{C}^o_{max}$ similar to Algorithm 2 using $\gamma^o$ in which curtailment values outside curtailment budget range are removed. \\
	Fill table $\Theta$ using equation 6 \\
	$c^* \gets min_{\widehat{c}}\{\widehat{c}| \Gamma^o_l \leq \Theta(\widehat{c},M) \leq \Gamma^o_u\}$ \\
	Run Algorithm 1 with $(c^*,\Theta(c^*,M))$ to get $X^o$ \\
	\KwOut{$X^o$, the list of curtailment strategies}
\end{algorithm}

\section{Results and Analysis}
In addition to providing theoretical guarantees, we perform practical evaluations of the algorithms developed in Section~\ref{sec:NLB}. We implemented the algorithms using MATLAB~\cite{matlab}. The LP and ILP algorithms were implemented using IBM ILOG Cplex Optimization Studio~\cite{cplex}. The experiments were performed on Dell optiplex with 4-cores and 4 GB RAM. A net-load balancing horizon of 32 15-min intervals was considered with 16 intervals of load curtailment and 16 intervals of solar curtailment. 

We evaluated the algorithms by varying the (L,U) pair where, for notational simplicity, L=$\sum_{t=1}^{T}\Gamma_t$ and U=$\Gamma$ (Section~\ref{sssec:fairilp}). Note that L and U are the lower and upper bound on the curtailment to be achieved in the curtailment horizon and hence represents the feasible curtailment range. The costs of the strategies were evaluated using the function $f(\gamma) = 2\gamma^2$, where $\gamma$ is the curtailment value of the strategy. Section~\ref{ssec:dataset} describes the input dataset generation. As the dataset was generated from historical data, perfect knowledge of the future was assumed with no prediction errors. 

\subsection{Dataset}
\label{ssec:dataset}
We obtained the load curtailment data from the demand response implementation on our University Campus. Our campus consists of 150 DR enabled nodes (buildings) each of which can follow 6 load curtailment strategies. The load curtailment values for each node-strategy pair was generated using algorithms mentioned in~\cite{aman2016learning}. We varied the load curtailment target from 500 to 1500 kWh.

Unlike load curtailment data for which we had a real world dataset, we had to simulate solar curtailment data. The output of a solar PV is determined mainly by the solar radiance at the PV installation, the PV area and the PV yield~\cite{pvout}. We used the hourly solar radiance data available at~\cite{solarrad} for the Los Angeles County. We then used the PV-output calculator available at~\cite{pvout} to calculate the solar generation data by varying the PV area from $10 m^2$ to $20 m^2$. We also varied the solar panel yield from $5\%$ to $15\%$. Hence, a fixed PV area and yield represents a node in our dataset. To obtain solar curtailment values, if the PV output for a given hour for a node was $O$, we generated 6 curtailment values: 0, $0.125*O, 0.25*O, 0.5*O, 0.75*O, O$. Hence, each curtailment value represents a PV connection/disconnection setting. All the 4 15-min intervals for a given hour were assigned the same curtailment values. For clarity, given an $(L,U)$ pair, we report the error of either solar or load curtailment, whichever one performs worse.

\subsection{Minimum Cost Net-Load Balancing}
We evaluate our Minimum Cost Net Load Balancing Algorithm (Algorithm 2) by varying the (L,U) pairs as discussed above. We perform experiments to compare the cost of the solution obtained by our algorithm against the optimal solution obtained by the ILP. We also perform a scalability analysis of our algorithm. 

\begin{figure}[ht]
	\begin{subfigure}{0.48\textwidth}
		\centering
		\includegraphics[width=\linewidth]{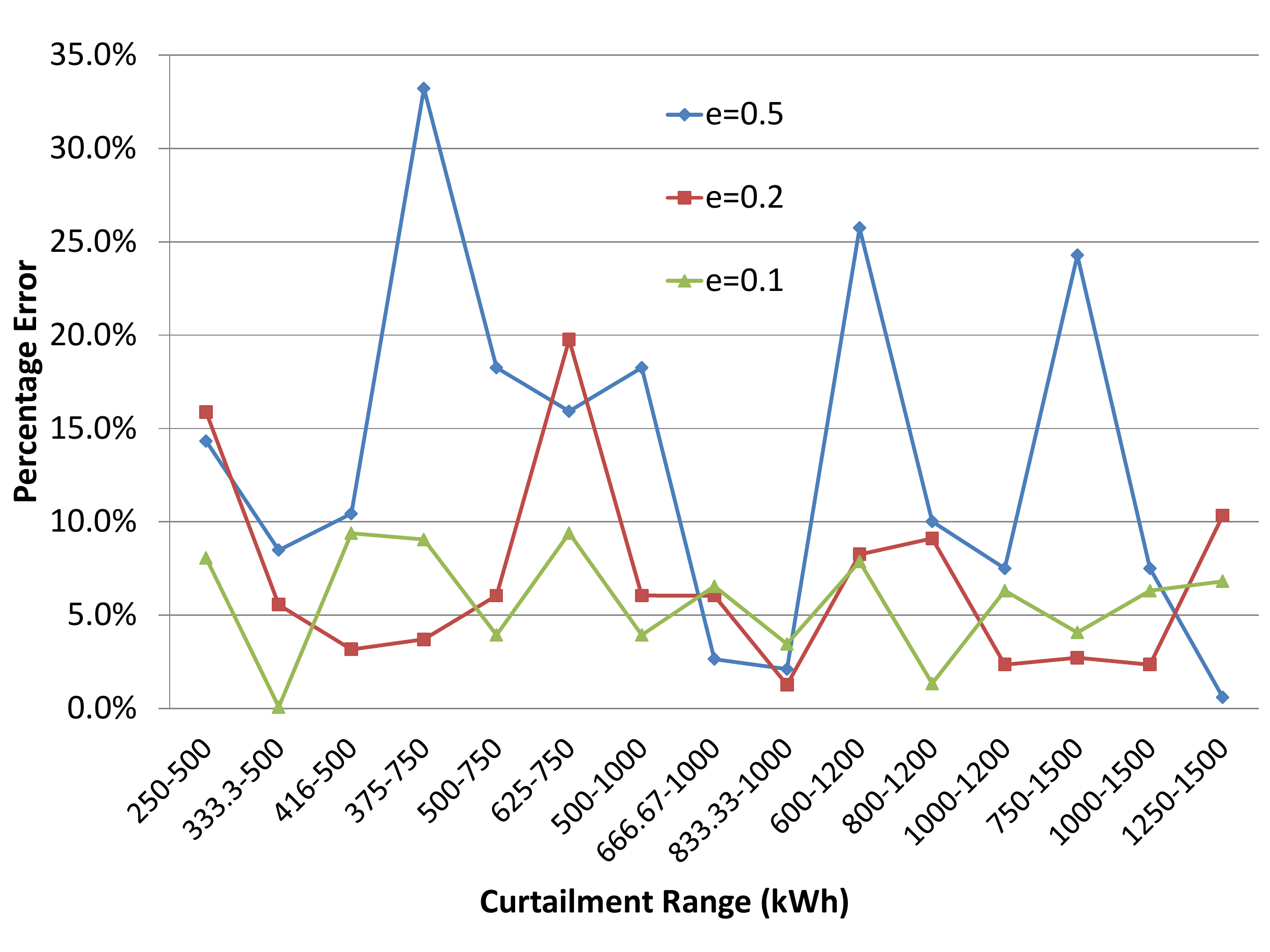}
		\caption{$\epsilon$ = 0.5-0.1 (50-10\% Error Guarantee)}
		\label{fig:accuracy-5-1}
	\end{subfigure}
	\begin{subfigure}{0.48\textwidth}
		\centering
		\includegraphics[width=\linewidth]{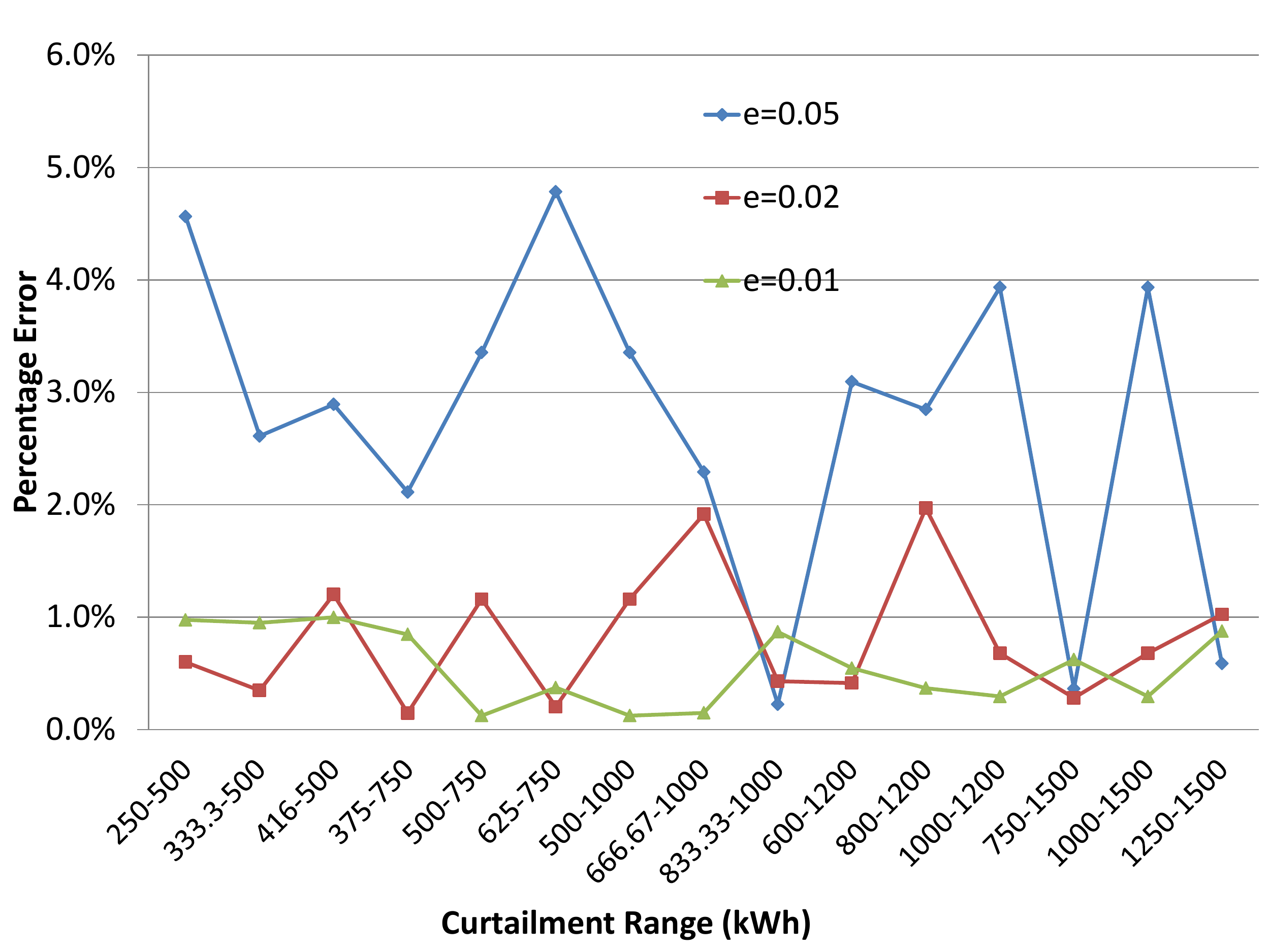}
		\caption{$\epsilon$ = 0.05-0.02 (5-2\% Error Guarantee)}
		\label{fig:accuracy-05-01}
	\end{subfigure}
	\caption{Percentage Error of the curtailment target constraints for various values of approximation factor $\epsilon$}
\end{figure}

\subsubsection{Accuracy Analysis} 
Figures~\ref{fig:accuracy-5-1} and~\ref{fig:accuracy-05-01} show the percentage error of the curtailment target constraint (Equation~\ref{eqn:achievegamma}) violation for various values of the theoretical guarantee $\epsilon$.  For example, if $\epsilon=0.05$, the algorithm will incur an error of 5\% in the worst case.  As we can note from the figures, the errors incurred by our solution are within the theoretical guarantees provided by the number $\epsilon$. In practice, the errors are much lower. For $\epsilon=0.5$ (50\%), the highest error incurred is 40\%. Similarly, for $\epsilon=0.2$  (20\%), barring a few cases, the errors are less than 15\%.

\begin{figure}
	\centering
	\includegraphics[width=0.6\linewidth]{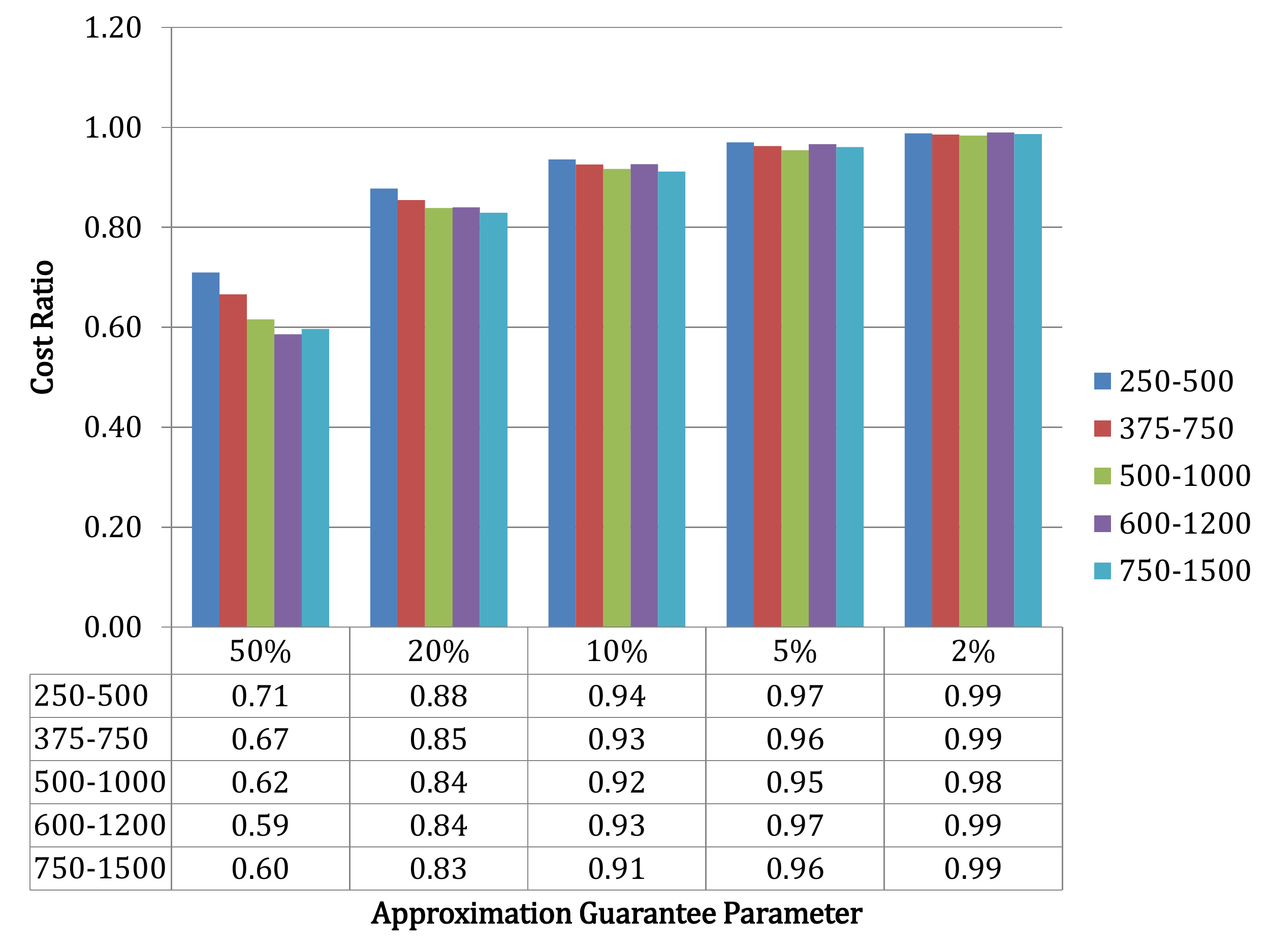}
	%\vspace{-0.4cm}
	\caption{Ratio of the cost of solution obtained by Algorithm 2 and the optimal solution versus the approximation guarantee ($\epsilon$)}
	\label{fig:costratio}
\end{figure}

As evident from Theorem 1, the cost of the solutions obtained from Algorithm 2 are less than or equal to the optimal solution. This is possible because instead of tightly satisfying the constraint, as the optimal solution does, Algorithm 2 tries to find a solution with a lower cost which possibly violates the constraints by a maximum of $\epsilon$. Figure~\ref{fig:costratio} shows the ratio of the cost of solution obtained by Algorithm 2 and the optimal solution versus the approximation guarantee ($\epsilon$). As we can notice from the figure, for each curtailment range, the ratio increases as the approximation guarantee is tightened i.e. reduced. For lower values of $\epsilon$ such as 0.02 (2\%), the ratio is close to 1. The ratio is never greater than 1 implying that the cost of the solution obtained by Algorithm 2 is always less than the optimal cost.

We can also note that for a fixed approximation guarantee ($\epsilon$), the ratio typically decreases with an increase in the upper bound U of the curtailment range. This trend is more pronounced in higher values of $\epsilon$ such as 0.5 (50\%) and 0.2 (20\%). This is because a higher value of U provides a larger error range in which to search for minimum cost solutions. For example, for 50\% error guarantee, the error range which is 750 kWh for U = 1500, is three times larger than the error range of 250 kWh for U = 500.

\subsubsection{Scalability Analysis}
In order to perform scalability analysis, we fix the values of $T$: the number of time intervals and $N$: the number of curtailment strategies per node. We vary $M$: the number of nodes for various values of $\epsilon$. As one can see from Figure~\ref{fig:scalability-num-buildings}, the algorithm exhibits a near quadratic increase in the runtime with respect to the number of nodes. This is consistent with the runtime complexity analysis in Section~\ref{sec:proofFPTA1} in the Appendix. 

\begin{figure}[ht]
	\begin{subfigure}{0.48\textwidth}
		\centering
		\includegraphics[width=\linewidth]{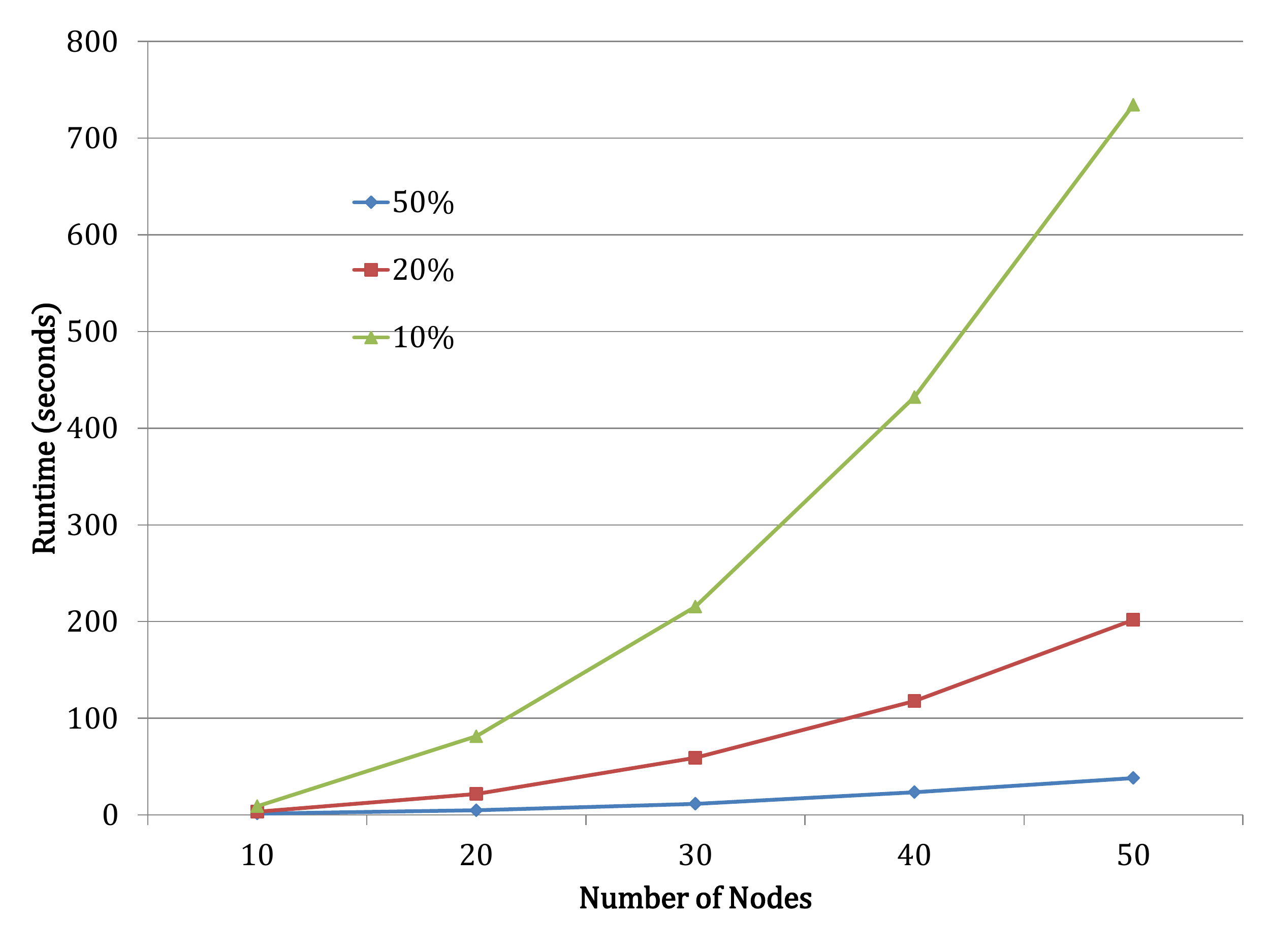}
		\caption{}
		\label{fig:scalability-num-buildings}
	\end{subfigure}
	\begin{subfigure}{0.48\textwidth}
		\centering
		\includegraphics[width=\linewidth]{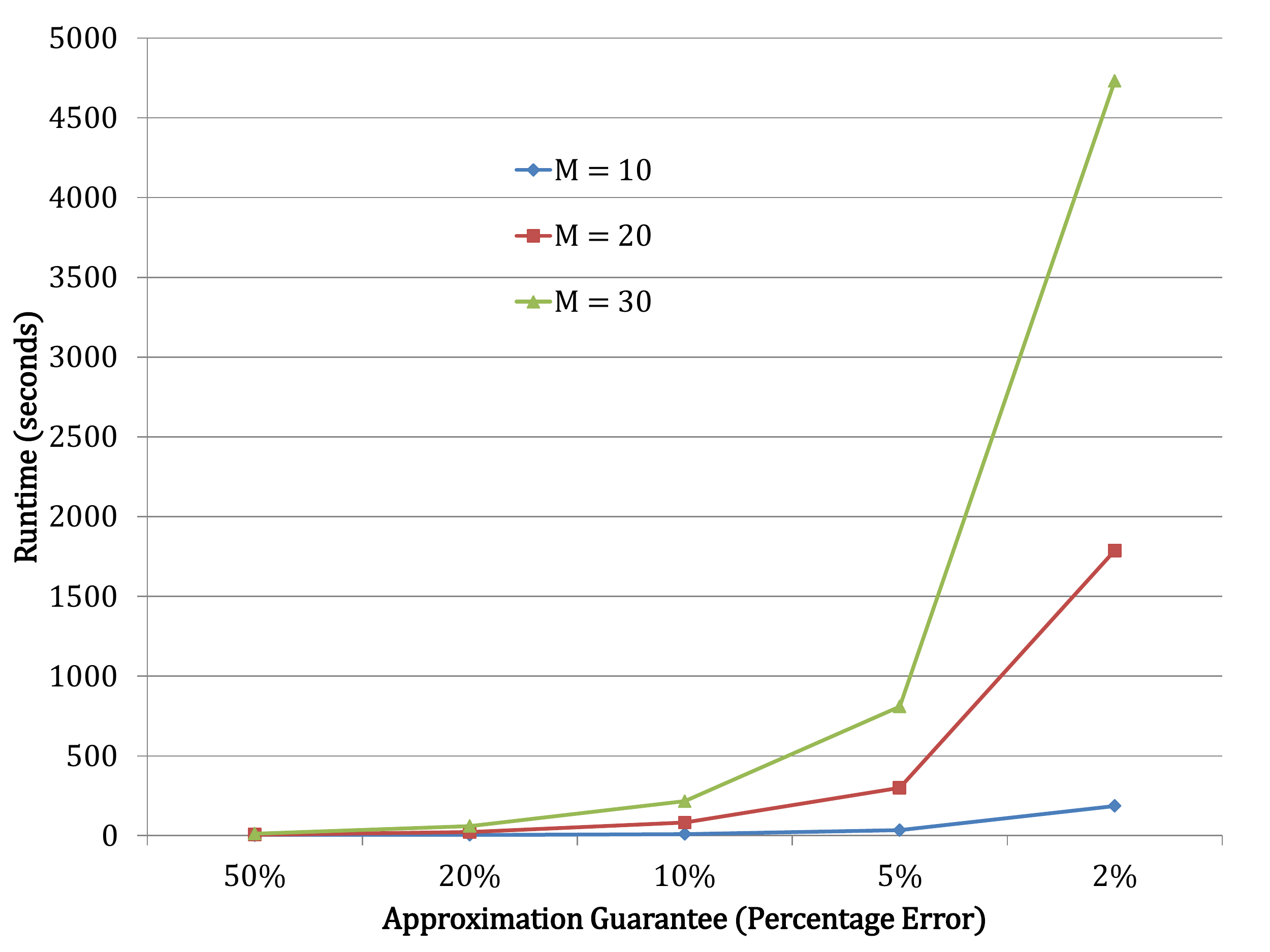}
		\caption{}
		\label{fig:scalability-epsilon}
	\end{subfigure}
	\caption{Runtime of Algorithm 2 versus: (a) Number of nodes for various values of epsilon, (b) $\epsilon$ (denoted as percentage error) for various values of the number of nodes }
\end{figure}

We also analyze the scalability with respect to $\epsilon$ by varying the value of $M$ while keeping $T$ and $N$ fixed. As shown in Figure~\ref{fig:scalability-epsilon}, decreasing $\epsilon$ (increasing accuracy) has a significant impact on runtime. Hence, $\epsilon$ is a parameter that can be used to trade-off accuracy versus computational complexity. 

A reader might comment that the runtimes observed, especially for smaller values of $\epsilon$ are very high. This can be justified as the problem of net-load balancing is NP-hard and hence, high computation capacity is required to increase the accuracy. Our objective in this work is to show that a polynomial time approximation algorithm exists for this NP-hard problem. We did not focus on finding the best optimal solution for the same. Moreover, our experiments are performed on MATLAB. For a real world deployment of this software, using faster programming languages such as C++ will significantly improve the run times (as high as 10-20 times as per the experience of the authors). In the context of real world scenarios, the California ISO's MRTU applications determine the desired generation changes 5-min ahead of the beginning of the interval and the system needs to start moving towards the set point 2.5 minutes ahead of the interval~\cite{makarov2009operational}. Given the typical inverter control delays of the order of milliseconds~\cite{gagrica2015microinverter}, even our naive MATLAB implementation meets the California ISO constraints for 40 nodes with $\epsilon = 0.2$ (20\% error) and 25 nodes for $\epsilon = 0.1$ (10\% error). Similar implementation in C++, assuming a conservative estimate of 10x improvement will meet the constraints for 70 nodes with $\epsilon=0.1$ (10\% error) and 40 nodes with $\epsilon=0.05$ (5\% error).

We also compare our algorithm against demand curtailment selection techniques such as those developed in~\cite{zois2014efficient} and~\cite{gatsis2011cooperative}. We observed that these techniques typically incur errors of around 5-10\% and in the worst case can go as high as 95\%. We excluded the details of this analysis as comparison against the optimal solutions already provides us with an idea of the near optimality of our algorithm.

\subsection{Minimum Cost Net-Load Balancing with Fairness}
\label{ssec:mcfexp}
In Algorithm 3, we introduced the notion of fairness by defining curtailment budget ranges for nodes. Here we evaluate the empirical performance of Algorithm 3. Note that the budget ranges for each node can be set appropriately by user/grid operator. In our experiments, we set the budget $B_b$ for each node $b$ to $\gamma^b_{max}/\sum_{b=1}^{B}\gamma^b_{max} \Gamma$ where $\gamma^b_{max}$ denotes the sum of the maximum curtailment values across all the intervals. We also set $\alpha_b = \alpha$ and vary the value $\alpha$. 

%We evaluate our Minimum Cost Net Load Balancing with Fairness algorithm (Algorithm 3) by varying the (L,U) pairs.
\begin{figure}[ht]
	\centering
	\includegraphics[width=0.9\linewidth]{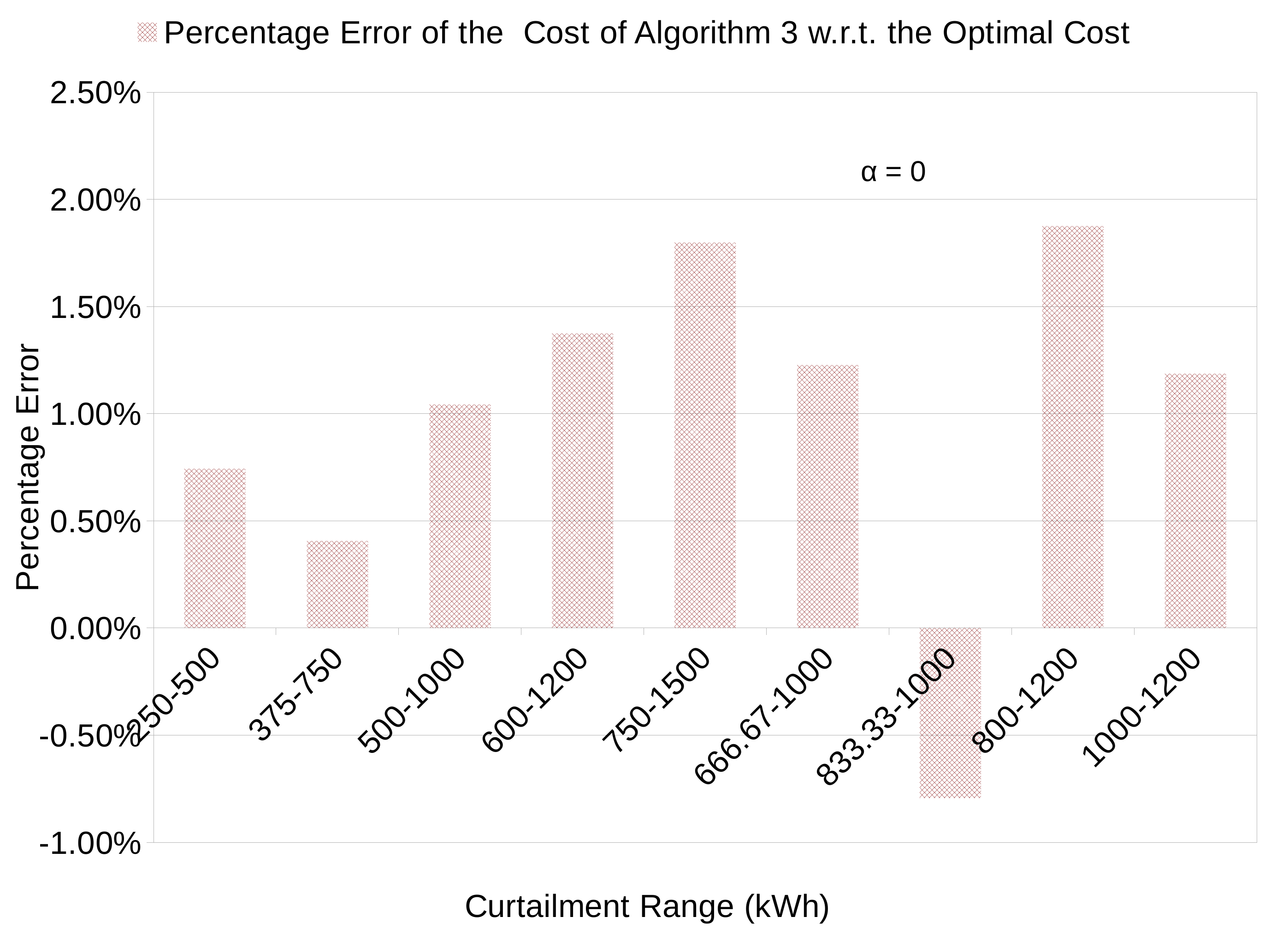}
	\vspace{-0.4cm}
	\caption{Percentage Error of the Cost of Algorithm 3 w.r.t. the Optimal Cost}
	\label{fig:Fairness-accuracy}
\end{figure}

In order to evaluate the accuracy of our algorithm, we compare against the optimal solution obtained from solving the ILP defined in Section~\ref{sssec:fairilp}. Figure~\ref{fig:Fairness-accuracy} shows the percentage error of the cost of the solution produced by algorithm 3 as compared against the optimal cost. Although, the worst case theoretical guarantee is a factor of 4 for quadratic cost function as provided by Theorem 5.3, in practice the algorithm performs much better with errors varying from -0.79\% to 1.88\%. The negative percentage error implies that the cost of the solution from our algorithm was less than the cost of the optimal solution. Note that this is possible because our solution violates certain constraints which the optimal solution does not. 

%However, the complexity of the ILP prevented CPLEX studio to generate results in a reasonable amount of time. Hence, we compared the objective value of the rounded results obtained from our algorithm against the objective value of the LP with fractional solutions. Since, an LP relaxation of an ILP always produces better objective value (albeit with infeasible integral solutions), the results shown will only improve if compared against the optimal ILP solution.  

%\begin{figure}[ht]
%\centering
%\includegraphics[width=0.9\linewidth]{experiments/Budget-overshoot}
%\vspace{-0.5cm}
%\caption{Percentage of Budget Overshoot for 10 Worst Affected Nodes}
%\label{fig:Budget-overshoot}
%\end{figure}

%Figure~\ref{fig:Budget-overshoot} shows 
We also studied the percentage of the budget overshot for the nodes across all the values of L and U. The theoretical guarantee of factor 2 (100\%) provided by Theorem 5.3 is honored in all the cases. We observed that the error is less than 13\% (1.13 factor) for the worst performing node. A similar study for the percentage of the interval curtailment target $\Gamma_t$ undershot for all the intervals reveals that the theoretical guarantee of factor 2 (100\%) provided by Theorem 5.3 is honored in all the cases. In practice the error is less than 7\% (1.07 factor) for the worst performing interval. This implies that the curtailment target for the worst performing interval could not be met and was deficit by 7\%.

%\begin{figure}
%\centering
%\includegraphics[width=0.9\linewidth]{experiments/gamma-undershoot}
%\caption{Percentage Undershoot of Interval Curtailment Target for various intervals (note: the labels of x axis do not denote the actual interval number)}
%\label{fig:gamma-undershoot}
%\end{figure}

%Figure~\ref{fig:gamma-undershoot} shows 

We also calculated the gini coefficient -- which is the most commonly used measure of inequality in economics -- of the curtailment achieved by each node as a proportion of its budget to measure the fairness of curtailment. Figure~\ref{fig:gini} shows the results for various values of $\alpha$ for a curtailment range of 500-1000 with 20 nodes. The value of gini decreases with increasing $\alpha$ as the dispersion of the curtailment decreases. Above $\alpha=0.2$, no feasible solution could be found. 

\begin{figure}
	\centering
	\includegraphics[width=0.7\linewidth]{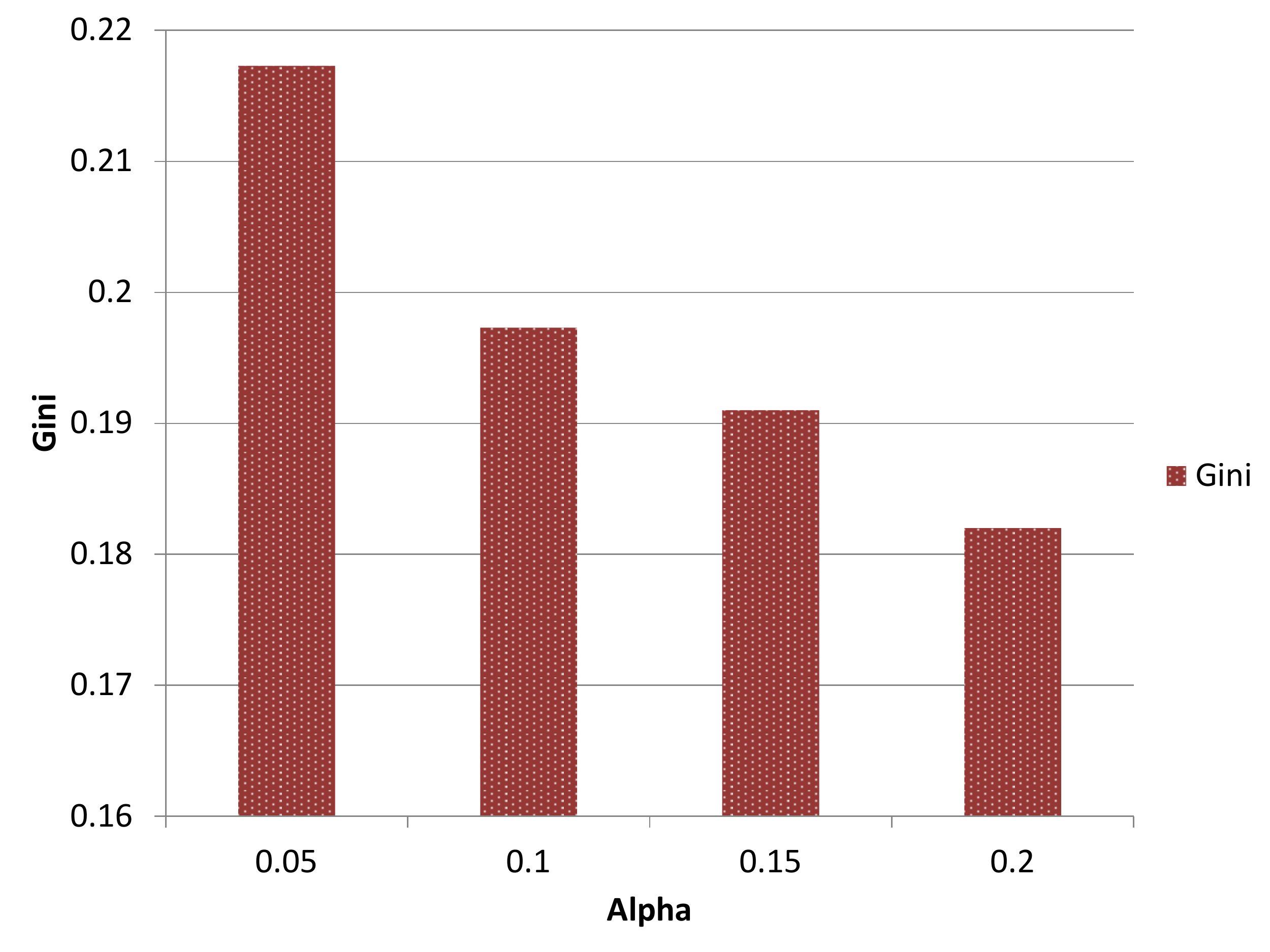}
	\vspace{-0.4cm}
	\caption{Gini Coefficient for various values of Alpha ($\alpha$)}
	\label{fig:gini}
\end{figure}

%Mathematically,
%\begin{flalign}
%G = \frac{\sum_{i=1}^{n}\sum_{j=1}^{n}|x_i - x_j|}{2n\sum_{i=1}^{n}x_i}
%\end{flalign}

\subsection{Online Algorithm for Minimum Cost Net-Load Balancing with Fairness}
In order to evaluate Algorithm 4, we compare the cost of the solutions obtained against the optimal solutions of the ILP defined in Section~\ref{sssec:fairilp} as generated in Section~\ref{ssec:mcfexp}. For various values of L and U pairs, we calculate the percentage error in the cost obtained by the online algorithm (Algorithm 4) with respect to the optimal solutions. The budget values $B_b$ input to Algorithm 4 are same as the ones used in Section~\ref{ssec:mcfexp}. 

\begin{figure}[ht]
\centering
\includegraphics[width=0.9\linewidth]{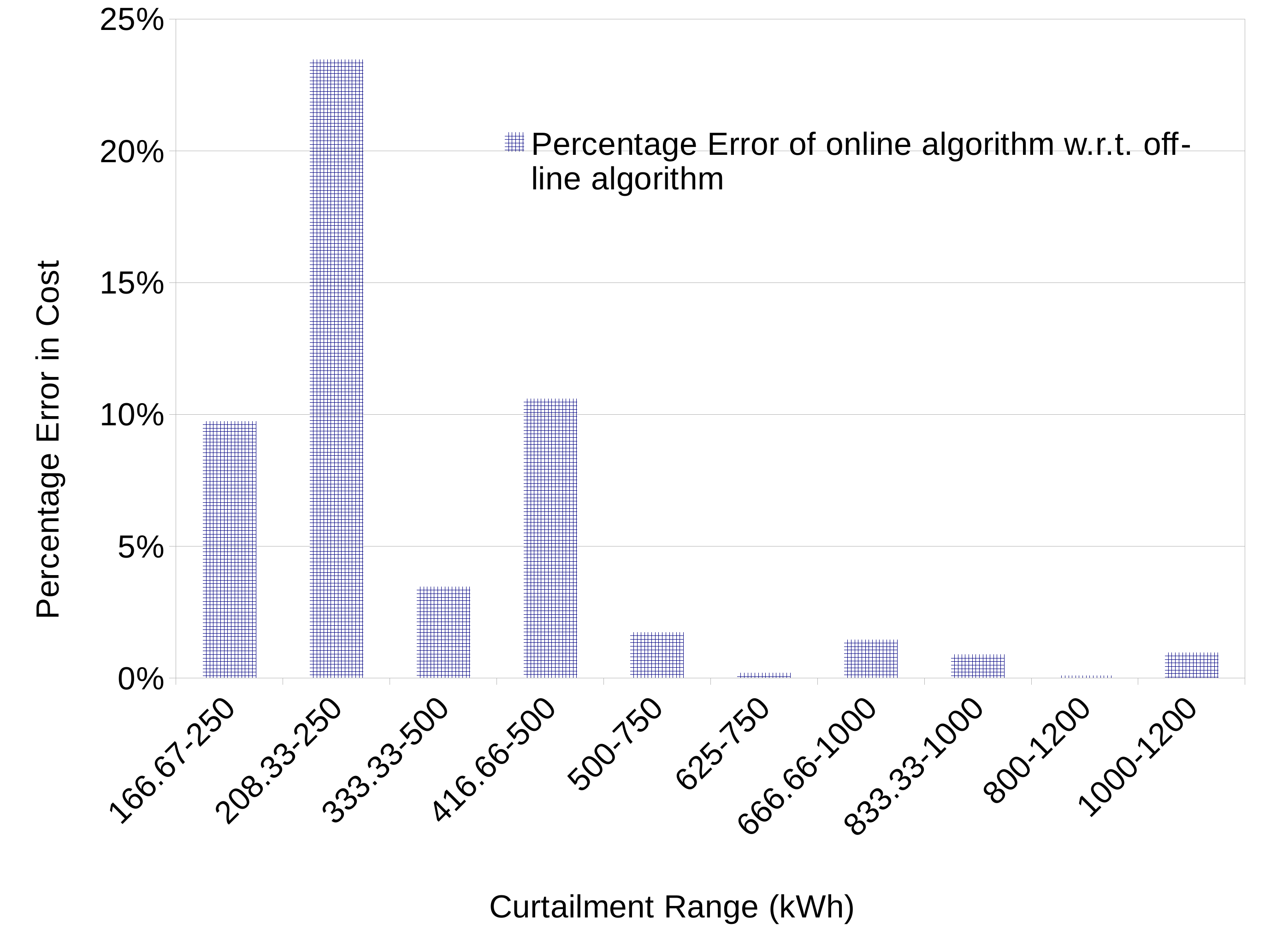}
\vspace{-0.4cm}
\caption{Percentage Error of the Cost of Online Algorithm w.r.t the Optimal Cost}
\label{fig:online-algo}
\end{figure}

Figure~\ref{fig:online-algo} shows the results obtained. Even though we do not provide any guarantee on the worst case bounds for the online algorithm, in practice the error incurred is low. The highest error incurred is around 23\% (factor 1.23). This makes the online algorithm a good candidate for net-load balancing when the predictions for the entire horizon are not known in advance.

\section{Conclusions}
One of the most significant change to the smart grids of future will be the proliferation of PV systems. The current distribution grid is characterized by a few active suppliers such as the utility and a large number of passive consumers with the power flowing unidirectionally from the suppliers to the consumers. However, in future the consumers will become an active participant of the grid with the capability of injecting power into the grid. Net-load balancing under this scenario will be a daunting task.

In this work, we addressed the problem of performing net-load balancing under the assumption that the nodes can be directly (and remotely) controlled by the grid operator. Due to the hardness of this problem, previous works in the literature had to compromise on either computational tractability or accuracy. We showed that it is possible to achieve both the conflicting goals simultaneously. 

However, there are several challenges which still need to be addressed for seamless PV integration. The uncertainty due to the errors in the forecasting algorithms is not considered in this work. Similarly, this work assumes complete observability and direct control of the grid which is true for micro-grids such as industry/university campus but might not be true for a city wide distribution grid. We will focus on addressing the above mentioned challenges in our future works.

\bibliographystyle{ACM-Reference-Format}
\bibliography{buildsys} 
\newpage
\section{Appendix}
\appendix
\section{NP-hardness of Minimum Cost Net-Load Balancing Problem}
\label{sec:nphard}
In order to prove that the problem of minimum cost net-load balancing is NP-hard, we will define a simpler version of the problem and reduce the well known knapsack problem, which is an NP-hard problem to it. Adding any additional constraints to this simpler version will only increase the complexity of the problem.

The simpler version of the problem $\Pi$ is formulated as follows: We are given a set $S$ of node-strategy pairs, where $s_{ij} \in S: i \in \{1,\dots,M\}, j \in \{1,\dots,N\}$ denotes the node-$i$-strategy-$j$ pair, where $M$ is the number of node and $N$ is the number of strategies. Given a curtailment value $\Gamma$, we need to output a $S^* \subseteq S$ such that $\Gamma \leq \sum_{i,j:s_{ij} \in S^*}\gamma(s_{ij})$, where $\gamma(s_{ij})$ denotes the curtailment obtained by $s_{ij}$ and $ \sum_{j=1}^{N}I(s_{ij})\leq 1 \forall i \in \{1,\dots,M\}$, where $I(s_{ij}) = 1$ if $s_{ij} \in S^*$ and 0 otherwise and $\sum_{i,j:s_{ij} \in S^*}\mathcal{C}(s_{ij})$ is minimized, where $\mathcal{C}(s_{ij})$ denotes the cost of curtailment by $s_{ij}$.

\begin{theorem}
	$\Pi$ is NP-hard.
\end{theorem}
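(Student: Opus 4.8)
The plan is to exhibit a polynomial-time reduction from the classical 0-1 knapsack problem to $\Pi$, thereby establishing NP-hardness. Recall the knapsack instance: $n$ items, each with a weight $w_i > 0$ and a profit $p_i > 0$, together with a capacity $W$, where the task is to choose $T \subseteq \{1,\dots,n\}$ maximizing $\sum_{i \in T} p_i$ subject to $\sum_{i \in T} w_i \leq W$. I would map this to a $\Pi$ instance with $M = n$ nodes and a single non-default strategy per node ($N = 1$), setting the curtailment $\gamma(s_{i1}) = w_i$ and the cost $\mathcal{C}(s_{i1}) = p_i$ for each node $i$, and fixing the target to $\Gamma = \left(\sum_{i=1}^{n} w_i\right) - W$. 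With only one strategy per node, the constraint $\sum_{j} I(s_{ij}) \leq 1$ reduces to a plain 0-1 choice of whether node $i$'s strategy enters $S^*$, so a $\Pi$ instance of this form is precisely a minimum-cost selection of items whose total curtailment is at least $\Gamma$.

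The key idea is \emph{complementation}. I would identify the chosen set $S^*$ in $\Pi$ with the complement $\bar T$ of a knapsack selection $T$. Then the curtailment constraint $\sum_{s_{ij} \in S^*} \gamma(s_{ij}) \geq \Gamma$ becomes $\sum_{i \in \bar T} w_i \geq \left(\sum_i w_i\right) - W$, which rearranges to exactly the capacity constraint $\sum_{i \in T} w_i \leq W$. Likewise, because $\sum_{i \in \bar T} p_i = \left(\sum_i p_i\right) - \sum_{i \in T} p_i$, minimizing the $\Pi$ objective $\sum_{s_{ij} \in S^*} \mathcal{C}(s_{ij}) = \sum_{i \in \bar T} p_i$ is equivalent to maximizing the knapsack profit $\sum_{i \in T} p_i$. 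Hence feasible solutions of the two instances are in value-preserving bijection under complementation, and an optimal $S^*$ yields an optimal $T$ (and conversely) in linear time.

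From this correspondence the conclusion is immediate: the construction is clearly polynomial in $n$, and any algorithm solving $\Pi$ would solve 0-1 knapsack via the complement map, so $\Pi$ is NP-hard. The one point that needs care — which I regard as the thing to get right rather than a deep obstacle — is the complementation bookkeeping: one must verify that the lower-bound constraint $\sum \gamma \geq \Gamma$ of $\Pi$ faithfully encodes the upper-bound capacity constraint $\sum w_i \leq W$ of knapsack and that minimization mirrors maximization, together with the degenerate edge case $W \geq \sum_i w_i$ (where $\Gamma \leq 0$ and every subset is trivially feasible). Once this mapping is checked, NP-hardness of the full minimum-cost net-load balancing problem follows, since $\Pi$ is a restriction of it and the additional constraints only enlarge its complexity.
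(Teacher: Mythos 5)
Your proof is correct, but it takes a genuinely different route from the paper's. Both arguments reduce from 0-1 knapsack with a single strategy per node, but they handle the direction mismatch (knapsack maximizes profit under an upper-bound capacity constraint, while $\Pi$ minimizes cost under a lower-bound curtailment constraint) in opposite ways. The paper \emph{negates}: it sets $\gamma(s_{i1}) = -d_i$, $\mathcal{C}(s_{i1}) = -v_i$ and $\Gamma = -D$, so the selected set in $\Pi$ is literally the knapsack selection, and the sign flips convert max/$\leq$ into min/$\geq$. You instead \emph{complement}: you keep $\gamma(s_{i1}) = w_i \geq 0$, $\mathcal{C}(s_{i1}) = p_i \geq 0$, set $\Gamma = \sum_i w_i - W$, and identify the $\Pi$ solution with the complement $\bar{T}$ of the knapsack selection, so that $\sum_{i \in \bar{T}} w_i \geq \sum_i w_i - W$ is exactly $\sum_{i \in T} w_i \leq W$ and minimizing $\sum_{i \in \bar{T}} p_i$ is exactly maximizing $\sum_{i \in T} p_i$. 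Your version buys something real: it produces hard instances with non-negative curtailments and costs, which is what curtailment values physically are, so the hardness applies to the problem as actually modeled. The paper's reduction, by contrast, only exhibits hard instances with negative curtailment values and negative costs, so strictly speaking it establishes hardness of a generalization of $\Pi$ in which such values are permitted; your complementation argument closes that gap, at the minor price of the extra bookkeeping (and the degenerate case $W \geq \sum_i w_i$) that you correctly flag and handle.
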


\begin{proof}
A 0-1 Knapsack problem~\cite{sahni1975approximate} is defined as follows: Given $M$ elements, with element $i$ having value $v_i$ and size $d_i$, find a sub-set of elements the sum of whose sizes is $\leq D$ and the value is maximized. To reduce this problem into $\Pi$, for each element $i$, we add $s_{i1}$ with $\gamma(s_{i1}) = -d_i$ and $\mathcal{C}(s_{i1}) = -v_i$. We set $\Gamma = -D$. Note that $j \in \{1\}$. One can easily observe that 0-1 knapsack problem has a solution if and only if $\Pi$ has a solution.
\end{proof}

%\section{Computing $C_{min}$ and $C_{max}$}
%\label{sec:cminmax}
%The choice of $C_{min}$ and $C_{max}$ affects the runtime of the algorithm. We can say that all the costs which have the same scaled value fall into the same bucket. Using this definition, $C_{min}$ determines the bucket size and $C_{max}$ determines the number of buckets. The choice of $C_{min}$ and $C_{max}$ should satisfy the following requirement: $0 < C_{min} \leq \text{OPT} \leq C_{max}$, where OPT is the value of optimal solution.
%
%To determine $C_{min}$ and $C_{max}$, we take the cost to curtailment ratio of all the non-zero curtailment values. We multiply the smallest and the largest ratios with $\sum_{t=1}^{T}\Gamma_t$ to get $C_{min}$ and $C_{max}$ respectively. We assume that a non-zero curtailment will incur a non-zero cost. 
\section{Proof of Theorem 5.1}
\label{sec:proofFPTA1}
For a curtailment value $\gamma$, we say that $\widehat{\gamma}$ is the rounded curtailment value. Also, we call $\gamma$ as the unrounded curtailment value.
\textit{Correctness:} We define \textit{Bucket} of $\widehat{\gamma}$ as the range $\mu \widehat{\gamma} - \mu < \gamma \leq \mu \widehat{\gamma}$ i.e. all the curtailment values which get rounded to $\widehat{\gamma}$. $C^{\Theta}_{\widehat{\gamma}}$ denotes the cost assigned to the bucket in table $\Theta$. For each interval $t$, using induction on Equation~\ref{eqn:dp}, it is easy to show that $C^{\Theta_t}_{\widehat{\gamma}} \;\forall t, 0 \leq \widehat{\gamma} \leq \widehat{\Gamma}$ will be the minimum cost required to achieve any curtailment value in the Bucket of $\widehat{\gamma}$ for table $\Theta_t$. Hence, for each element $e_t = (c_t, \widehat{\gamma}_t)$, $c_t$ is the minimum possible cost to achieve any curtailment value in the Bucket of $\widehat{\gamma}_t$ corresponding to the table $\Theta_t$. Now, if we can show that the range of curtailment values covered by the elements in $S_t \forall t$ contains the curtailment value chosen by any optimal solution, and that the cumulative curtailment value of any optimal solution at any time $t$ is contained in the entries $\Phi(:,t)$ (i.e. entries corresponding to time interval $t$), then using induction on $\Phi$ table, we can show the aggregate cost of the solutions obtained by Algorithm 2 will be less than or equal to the optimal solution. 
Now, in each interval, the lowest rounded curtailment value $\widehat{\gamma}$ considered to create the set $S_t$ is $\widehat{\gamma}_t$. Hence, the lowest unrounded curtailment $\gamma$ satisfies: $\gamma_t - \mu < \gamma \leq \gamma_t$ i.e. $\gamma \leq \gamma_t$. Now, the maximum rounded curtailment value $\widehat{\gamma}$ considered by Algorithm 2 in the $\Phi$ table is $\widehat{\Gamma}$. So, the maximum curtailment value $\gamma$ considered in the corresponding bucket is $\mu \widehat{\Gamma} \geq \Gamma$.  Hence, the range of curtailment values considered by Algorithm 2 covers the range in which optimal solution can reside and so Algorithm 2 does not miss any possible solution of a lower cost. 

\textit{Runtime:} The Algorithm fills $T$ $\Theta$ tables each of which is of size $M \widehat{\Gamma}$. We assume that $\Gamma = O(T\Gamma_{min})$.  Each entry of $\Theta$ requires $O(N)$ time. Hence, the total runtime for line~\ref{line:filltheta} and~\ref{line:computest} of Algorithm 2 is $O(\frac{M^2 N T^2}{\epsilon})$.

The number of entries in table $\Phi$ is $O(T \widehat{\Gamma})$. Each entry requires $O(\widehat{\Gamma})$ time. Hence, the total required time to fill $\Phi$ is $O(T \widehat{\Gamma}^2 = O(\frac{T^3 M^2}{\epsilon^2}))$. Once the tables are filled, the for loop requires $O(TMN \widehat{\Gamma})$ to output the strategies. Hence, the algorithm is polynomial in the input size $M, N, T$ and $\frac{1}{\epsilon}$. 
  
\textit{Approximation Guarantee:} Let $\widehat{\Gamma}_x = \sum_{i=1}^{M}\sum_{t=1}^{T}\widehat{\gamma}_{it}$ be the solution from our algorithm, where $\widehat{\gamma}_{it}$ denotes the rounded curtailment by node $i$ in time $t$. From line~\ref{line:computest} of the algorithm, for all $t$, we know that $\sum_{i=1}^{M}\widehat{\gamma}_{it} \geq \widehat{\Gamma}_t$. Also, $\frac{\gamma_{it}}{\mu} \leq \widehat{\gamma}_{it} \leq \frac{\gamma_{it}}{\mu} + 1$ by the definition of $\widehat{\gamma}_{it}$. This implies that $\frac{\Gamma_t}{\mu} \leq \sum_{i=1}^{M}(\frac{\gamma_{it}}{\mu} + 1) \leq \sum_{i=1}^{M}\frac{\gamma_{it}}{\mu} + M$. So, $\Gamma_t - \mu M \leq \sum_{i=1}^{M}\gamma_{it} \implies \sum_{i=1}^{M}\gamma_{it} \geq \Gamma_t - \epsilon \Gamma_{min} \geq (1 - \epsilon) \Gamma_t$ as $\Gamma_t \geq \Gamma_{min}$. Hence, in each interval, the curtailment target constraint (Equation~\ref{eqn:achievegamma}) is violated by a maximum factor of $(1 - \epsilon)$.

Now, from line~\ref{line:gammaless} of the algorithm, $\widehat{\Gamma}_x \leq \widehat{\Gamma}$. So, $\frac{\Gamma_x}{\mu} \leq \frac{\Gamma}{\mu} + 1$. This means that $\Gamma_x \leq \Gamma + \epsilon\frac{\Gamma_{min}}{M} \leq \Gamma (1+\frac{\epsilon}{M}) \leq \Gamma(1+\epsilon)$ as $\Gamma_{min} \leq \Gamma$ and $M \geq 1$. Hence, the aggregate curtailment target constraint (Equation~\ref{eqn:gammaless}) is violated by a maximum factor of $(1+\epsilon)$.

\section{Proofs for Theorems 5.2-5.3}
\label{sec:prooflpround}
Let $\gamma'_{bt} = \sum_{j=1}^{N}\gamma_{bj}(t)x_{bj}(t)$ be the curtailment value obtained for node $b$ in time interval $t$. Let $\gamma^i_{bt}$ and $\gamma^{i+1}_{bt}$ be the curtailment values of strategies between which $\gamma'_{bt}$ falls i.e. $\gamma^i_{bt} \leq \gamma'_{bt} \leq \gamma^{i+1}_{bt}$. Now, if $\gamma'_{bt}$ is rounded up to $\gamma^{i+1}_{bt}$, this implies $\gamma'_{bt} \geq (\gamma^i_{bt}+\gamma^{i+1}_{bt})/2 \geq \gamma^{i+1}_{bt}/2$. Summing over all values of $b, t$ ensures that Equation~\ref{eqn:mcnlfgammaless} and the upper bound of Equation~\ref{eqn:mcnlffair} are violated by at most a factor of 2. Similarly it can be shown that if the objective function is linear, it will be bounded by a factor of 2 and if it is quadratic, it will be bounded by a factor of 4.

Now, in order to provide a bound on the constraint violation of Equation~\ref{eqn:mcnlfachievegamma} and the lower bound of Equation~\ref{eqn:mcnlffair}, we make an assumption that $\gamma^{i+1}_{bt} \leq (2k-1)\gamma^{i}_{bt}$. Using this assumption, we can ensure that in the worst case $\gamma^{i}_{bt} \geq \gamma'_{bt}/k$ i.e. the constraint is violated at most by a factor of $k$. Note that if $\gamma^i_{bj} = 0$, we cannot provide any guarantee.

The analysis above can be used by the grid operator to improve the efficiency of net-load balancing. The grid operator can setup curtailment configurations (e.g. load curtailment strategies such as GTR, etc.) such that the difference between two consecutive curtailment values is not vary large. However, it is not always possible to control the curtailment configurations. Hence, the grid operator can develop techniques by noticing that a higher curtailment value can be expected to have less violations for two reasons: (1) very few values of $\gamma'_{bt}$ will have $\gamma^i_{bj} = 0$ and (2) since a large number of non-zero curtailment strategies will be selected, several of them will be rounded up thus reducing the possibility that Equation~\ref{eqn:mcnlfgammaless} is violated by a large factor. Hence, if the required curtailment value is small for a curtailment horizon, the grid operator can reduce the number of nodes which participate in the curtailment horizon.

\end{document}